\spnewtheorem{definition}{Definition}{\bfseries}{\rmfamily}
\begin{document}



\newcommand{\defeq}{\mathrel{\overset{\text{\tiny def}}{=}}} 

\newcommand{\pl}[1]{\textsc{#1}} 

\newcommand{\lambdacoop}{\lambda_{\mathsf{coop}}} 

\newcommand{\bnfis}{\mathrel{\;{:}{:}{=}\ }}
\newcommand{\bnfor}{\mathrel{\;\big|\ \ }}



\newcommand{\sig}{\Sigma} 

\newcommand{\Tree}[2]{\mathrm{Tree}_{#1}\left(#2\right)} 
\newcommand{\retTree}[1]{\mathsf{return}\,#1} 

\newcommand{\eq}{\mathrm{Eq}} 

\newcommand{\Th}{\mathcal{T}} 
\newcommand{\Thsig}{\sig_\Th} 
\newcommand{\Theq}{\eq_\Th} 

\newcommand{\ThUU}[2]{\mathcal{U}_{#1, #2}} 
\newcommand{\UU}[2]{\mathsf{U}_{#1, #2}} 

\newcommand{\ThKK}[4]{\mathcal{K}_{#2, #3, #4, #1}} 
\newcommand{\KK}[4]{\mathsf{K}_{#2, #3, #4, #1}} 

\newcommand{\UUskel}{\skel{\mathsf{U}}} 
\newcommand{\KKskel}[1]{\skel{\mathsf{K}}_{#1}} 

\newcommand{\Runner}[4]{\mathsf{Runner}_{#1, #2, #3}\, #4} 
\newcommand{\RunnerSkel}[1]{\skel{\mathsf{Runner}}\,#1} 

\newcommand{\FreeAlg}[2]{\mathrm{Free}_{#1}\left(#2\right)} 
\newcommand{\lift}[1]{#1^\dagger} 
\newcommand{\freelift}[1]{#1^\ddagger} 

\newcommand{\M}{\mathcal{M}} 
\newcommand{\Mcarrier}{\vert \mathcal{M} \vert} 

\newcommand{\T}{T} 
\newcommand{\St}[1]{\mathsf{St}_{#1}} 
\newcommand{\Exc}[1]{\mathsf{Exc}_{#1}} 

\newcommand{\R}{\mathcal{R}} 
\newcommand{\Rcarrier}{\vert\R\vert} 
\newcommand{\iRcarrier}[1]{\vert\R_{#1}\vert} 
\newcommand{\runh}{\mathsf{r}} 


\newcommand{\expto}{\Rightarrow} 
\newcommand{\lam}[1]{\lambda #1 \,.\,} 

\newcommand{\One}{\mathbb{1}} 
\newcommand{\Zero}{\mathbb{0}} 

\newcommand{\op}{\mathsf{op}} 
\newcommand{\coop}{\mathsf{\overline{op}}} 
\newcommand{\opto}{\leadsto} 
\newcommand{\tysigop}[3]{#1 \opto #2 \mathbin{!} #3} 

\newcommand{\sigraise}{\mathsf{raise}} 
\newcommand{\sigkill}{\mathsf{kill}} 
\newcommand{\siggetenv}{\mathsf{getenv}} 
\newcommand{\sigsetenv}{\mathsf{setenv}} 


\newcommand{\at}{\mathbin{@}} 
\newcommand{\atkernel}{\mathbin{\!\lightning\!}} 
\newcommand{\atuser}{\mathbin{!}} 

\newcommand{\tybase}{\mathsf{b}} 
\newcommand{\tyunit}{\mathsf{unit}} 
\newcommand{\tyempty}{\mathsf{empty}} 
\newcommand{\typrod}[2]{#1 \times #2} 
\newcommand{\tysum}[2]{#1 + #2} 
\newcommand{\tyfun}[2]{#1 \to #2} 
\newcommand{\tyfunK}[2]{#1 \to #2} 
\newcommand{\tyrunner}[4]{#1 \Rightarrow (#2, #3, #4)} 


\newcommand{\Ueff}{\mathcal{U}} 
\newcommand{\Veff}{\mathcal{V}} 
\newcommand{\Keff}{\mathcal{K}} 
\newcommand{\Leff}{\mathcal{L}} 

\newcommand{\uty}[1]{\overline{#1}} 
\newcommand{\kty}[1]{\overline{\overline{#1}}} 

\newcommand{\tyuser}[2]{#1 \atuser #2} 
\newcommand{\tykernel}[2]{#1 \atkernel #2} 


\newcommand{\skel}[1]{{#1}^{\mathsf{s}}} 
\newcommand{\skeleton}[1]{{#1}^{\mathsf{s}}} 

\newcommand{\tyrunnerskel}[1]{\mathsf{runner}\,C} 
\newcommand{\tyfunskel}[2]{#1 \to #2} 
\newcommand{\tyfunKskel}[2]{#1 \to #2} 

\newcommand{\tyuserskel}[1]{#1{!}} 
\newcommand{\tykernelskel}[2]{#1 \atkernel #2} 


\newcommand{\Ops}{\mathcal{O}} 
\newcommand{\Sigs}{\mathcal{S}} 
\newcommand{\Excs}{\mathcal{E}} 

\newcommand{\tm}[1]{\mathsf{#1}} 
\newcommand{\tmkw}[1]{\tm{\color{keywordColor}#1}} 

\newcommand{\tmconst}[1]{\tm{#1}}
\newcommand{\tmunit}{()} 
\newcommand{\tmpair}[2]{( #1 , #2 )} 
\newcommand{\tminl}[2][]{\tmkw{inl}_{#1}\,#2} 
\newcommand{\tminr}[2][]{\tmkw{inr}_{#1}\,#2} 
\newcommand{\tmfun}[2]{{\mathop{\tmkw{fun}}}\; (#1) \mapsto #2} 
\newcommand{\tmfunK}[2]{{\mathop{\tmkw{funK}}}\; (#1) \mapsto #2} 
\newcommand{\tmapp}[2]{#1\,#2} 

\newcommand{\tmrunner}[2]{\{#1\}_{#2}} 

\newcommand{\tmreturn}[2][]{\tmkw{return}_{#1}\, #2} 
\newcommand{\tmlet}[4][]{\tmkw{let}_{#1}\; #2 = #3 \;\tmkw{in}\; #4} 
\newcommand{\tmtry}[2]{\tmkw{try}\; #1 \; \tmkw{with}\; #2}
\newcommand{\tmkernel}[3]{\tmkw{kernel} \; #1 \at #2 \;\tmkw{finally} \; #3}
\newcommand{\tmuser}[2]{\tmkw{user} \; #1 \; \tmkw{with} \; #2}
\newcommand{\tmrun}[4]{\tmkw{using} \; #1 \at #2 \; \tmkw{run} \; #3 \; \tmkw{finally} \; #4}

\newcommand{\tmop}[5]{\tm{#1}_{#2}(#3, #4, #5)} 
\newcommand{\tmgeneff}[2]{\tm{#1}\; #2} 
\newcommand{\tmcont}[2]{(#1 \,.\, #2)} 
\newcommand{\tmexccont}[3]{(#1_{#2})_{#2 \in #3}} 

\newcommand{\tmexcept}[3]{\tmreturn{#1} \mapsto #2, #3} 
\newcommand{\tmfinally}[4]{\{\tmreturn{#1} \mapsto #2, #3, #4\}} 

\newcommand{\tmmatch}[3][]{\tmkw{match}\;#2\;\tmkw{with}\;\{#3\}_{#1}} 

\newcommand{\tmraise}[2][]{\tmkw{raise}_{#1}\,#2} 
\newcommand{\tmkill}[2][]{\tmkw{kill}_{#1}\,#2} 

\newcommand{\tmgetenv}[2][]{\tmkw{getenv}_{#1}#2} 
\newcommand{\tmsetenv}[2]{\tmkw{setenv}(#1, #2)} 


\newcommand{\types}{\vdash} 
\newcommand{\typesskel}{\vdash^{\!\mathsf{s}}} 
\newcommand{\of}{\mathinner{:}} 

\newcommand{\vj}[3]{#1 \vdash #2 : #3} 
\newcommand{\cj}[5]{#1 \vdash^{\kern -1.25ex #2} #3 : #4 \mathrel{!} #5} 
\newcommand{\kj}[7]{#1 \vdash^{\kern -1.25ex #2} #3 : #4 \mathrel{!} #5 \mathrel{\lightning} #6 \at #7} 

\newcommand{\subty}[2]{#1 \sqsubseteq #2} 
\newcommand{\sub}{\sqsubseteq} 

\newcommand{\mkrule}[3]{\frac{#1}{#3}{\textsc{#2}}} 

\newcommand{\coopinfer}[3]{\inferrule*[Lab={\color{rulenameColor}#1}]{#2}{#3}}


\newcommand{\sem}[1]{[\![\![#1]\!]\!]} 
\newcommand{\skelsem}[1]{[\![#1]\!]} 

\newcommand{\subexpto}{\Rrightarrow} 

\newcommand{\Set}{\mathrm{Set}} 
\newcommand{\Cpo}{\mathrm{Cpo}} 
\newcommand{\Sub}{\mathrm{Sub}} 

\newcommand{\cond}[3]{\mathsf{if}\;#1\;\mathsf{then}\;#2\;\mathsf{else}\;#3} 


\title{Runners in action}

\author{Danel Ahman \and Andrej Bauer}
\institute{%
  Faculty of Mathematics and Physics\\
  University of Ljubljana, Slovenia}

\maketitle

\begin{abstract}
Runners of algebraic effects, also known as comodels, provide a mathematical
model of resource management. We show that they also give rise to a programming
concept that models top-level external resources, as well as allows programmers
to modularly define their own intermediate ``virtual machines''.
We capture the core ideas of programming with runners in an equational calculus $\lambdacoop$, which we equip  
with a sound and coherent denotational semantics that guarantees the linear use of
resources and execution of finalisation code. We accompany $\lambdacoop$ with
examples of runners in action, provide a prototype language implementation in \pl{OCaml},
as well as a \pl{Haskell} library based on $\lambdacoop$.
\keywords{Runners, comodels, algebraic effects, resources, finalisation.}
\end{abstract}



\definecolor{codegreen}{rgb}{0,0.6,0}
\definecolor{codegray}{rgb}{0.5,0.5,0.5}
\definecolor{codepurple}{rgb}{0.58,0,0.82}
\definecolor{backcolour}{rgb}{0.95,0.95,0.92}

\colorlet{keywordColor}{NavyBlue} 
\colorlet{rulenameColor}{Gray} 

\def\lstlanguagefiles{coop.tex}
\lstset{language=coop,upquote=true}
\let\ls\lstinline



\crefformat{section}{\S#2#1#3}
\Crefformat{section}{\S#2#1#3}

\crefformat{subsection}{\S#2#1#3}
\Crefformat{subsection}{\S#2#1#3}

\crefformat{subsubsection}{\S#2#1#3}
\Crefformat{subsubsection}{\S#2#1#3}

\crefformat{theorem}{Thm.~#2#1#3}
\Crefformat{theorem}{Thm.~#2#1#3}

\crefformat{proposition}{Prop.~#2#1#3}
\Crefformat{proposition}{Prop.~#2#1#3}

\crefformat{figure}{Fig.~#2#1#3}
\Crefformat{figure}{Fig.~#2#1#3}


\section{Introduction}
\label{sect:introduction}

Computational effects, such as exceptions, input-output, state, nondeterminism, and
randomness, are an important component of general-purpose programming languages,
whether they adopt functional, imperative, object-oriented, or other
programming paradigms. Even pure languages exhibit
computational effects at the top level, so to speak, by interacting with their
external environment.

In modern languages, computational effects are often
structured using~\emph{monads}~\cite{Moggi:ComputationalLambdaCalculus,Moggi:NotionsofComputationandMonads,Wadler:Monads},
or \emph{algebraic effects and handlers}~\cite{Kammar:Handlers,Plotkin:NotionsOfComputation,Plotkin:HandlingEffects}.
These mechanisms excel at implementation of computational effects within
the language itself. For instance, the familiar implementation of mutable state in terms of state-passing functions requires no native state, and can be implemented either as a monad or using handlers.
One is naturally drawn to using these techniques also for dealing with actual effects, such as manipulation of native memory and access to  hardware. These are represented inside the language as algebraic operations (as in \pl{Eff}~\cite{Bauer:AlgebraicEffects}) or a monad (in the style of \pl{Haskell}'s~\textsf{IO}), but treated specially by the language's top-level runtime, which invokes corresponding operating system functionality.
While this approach works in practice, it has some unfortunate downsides too, namely \emph{lack of modularity and
linearity}, and \emph{excessive generality}.

Lack of modularity is caused by having the external resources hard-coded into the top-level runtime.
As a result, changing which resources are available and how they are implemented requires modifications of the language implementation. Additional complications arise when a language supports several operating systems and hardware platforms, each providing their own, different feature set. One wishes that the ingenuity of the language implementors were better supported by a more flexible methodology with a sound theoretical footing.

Excessive generality is not as easily discerned, because generality of programming concepts makes a language expressive and useful, 
such as general algebraic effects and handlers enabling one to implement timeouts, rollbacks, stream redirection~\cite{Plotkin:HandlingEffects}, async \& await~\cite{Leijen:AsyncAwait}, and concurrency~\cite{Dolan:MulticoreOCaml}. However, the flip side of such expressive freedom is the lack of any guarantees about how external resources will actually be used.
For instance, consider a simple piece of code, written in \pl{Eff}-like syntax, which first opens a file, then writes to it, and finally closes it:
\begin{lstlisting}
let fh = open "hello.txt" in write (fh, "Hello, world."); close fh
\end{lstlisting}

What this program actually does depends on how the operations $\tm{open}$, $\tm{write}$, and $\tm{close}$ are handled. For all we know, an enveloping handler may intercept the $\tm{write}$ operation and discard its continuation, so that $\tm{close}$ never happens and the file is not properly closed. 
Telling the programmer not to shoot themselves in the foot by avoiding such handlers is not helpful, because the handler may encounter an external reason for not being able to continue, say a full disk.

Even worse, external resources may be misused accidentally when we combine two handlers, each of which works as intended on its own. 
For example, if we combine the above code with a non-deterministic $\tm{choose}$ operation, as in
\begin{lstlisting}
let fh = open "greeting.txt" in
let b = choose () in
if b then write (fh, "hello") else write (fh, "good bye") ; close fh
\end{lstlisting}
and handle it with the standard non-determinism handler
\begin{lstlisting}
handler { return x -> [x], choose () k -> return (append (k true) (k false)) }
\end{lstlisting}
The resulting program attempts to close the file twice, as well as write to it twice, because the continuation $\tm{k}$ is invoked twice when handling $\tm{choose}$.
Of course, with enough care all such situations can be dealt with, but that is beside the point. It is worth sacrificing some amount of the generality of algebraic effects and monads in exchange for predictable and safe usage of external computational effects, so long as the vast majority of common use cases are accommodated.

\subsubsection*{Contributions}

We address the described issues by showing how to design a programming
language based on \emph{runners of algebraic effects}. We
review runners in~\cref{sect:runnersbyexample} and recast them as
a programming construct in~\cref{sec:programming-with-runners}.
In \cref{sect:corecalculus}, we present $\lambdacoop$, a calculus that captures the core ideas of programming with
runners. 
We provide a coherent and sound denotational semantics for $\lambdacoop$ in \cref{sec:denotational-semantics}, where we also prove that well-typed code is properly finalised.
In~\cref{sect:examples}, we show examples of runners in action. The paper is accompanied by a
prototype language \pl{Coop} and a \pl{Haskell} library \pl{Haskell-Coop}, based on~$\lambdacoop$, see~\cref{sect:implementation}. The relationship between $\lambdacoop$ and existing work is addressed in~\cref{sect:relatedwork}, and future possibilities discussed in \cref{sect:conclusion}.

The paper is also accompanied by an online appendix (\url{https://arxiv.org/abs/1910.11629}) that provides the typing and equational rules we omit in \cref{sect:corecalculus}.

Runners are \emph{modular} in that they
can be used not only to model the top-level interaction
with the external environment, but programmers can also use them to
define and nest their own intermediate ``virtual machines''.
Our runners are \emph{effectful}: they may handle operations by calling further outer operations, and raise exceptions and send signals, through which  exceptional conditions and runtime errors are communicated back to user programs in a safe
fashion that preserves linear usage of external resources and ensures their proper finalisation.

We achieve \emph{suitable generality} for handling of external resources by
showing how runners provide implementations of algebraic operations together with
a natural notion of finalisation, and a strong guarantee that in the absence of
external kill signals the finalisation code is executed exactly once (\cref{thm:finalisation}). We argue
that for most purposes such discipline is well worth having, and giving up the
arbitrariness of effect handlers is an acceptable price to pay.
In fact, as will be apparent in the denotational semantics, runners are simply
a restricted form of handlers, which apply the continuation at most once in a tail call position.

Runners guarantee \emph{linear usage of resources} not through a linear or uniqueness type system (such as in the \pl{Clean} programming language~\cite{Koopman:FPinClean}) or a syntactic discipline governing the application of continuations in handlers, but rather by a design based on the linear state-passing technique studied by Møgelberg and Staton~\cite{Mogelberg:LinearUsageOfState}.
In this approach, a computational resource may be implemented without restrictions, but is then guaranteed to be used linearly by user code.



\section{Algebraic effects, handlers, and runners}
\label{sect:runnersbyexample}

We begin with a short overview of the theory of algebraic effects and handlers, as well as
runners. To keep focus on how runners give rise to a programming
concept, we work naively in set theory. Nevertheless, we use
category-theoretic language as appropriate, to make it clear that there are no essential
obstacles to extending our work to other settings (we return to this point in \cref{sec:semantics-types}).

\subsection{Algebraic effects and handlers}
\label{sect:algebraiceffects}

There is by now no lack of material on the algebraic approach to structuring computational effects.
For an introductory treatment we refer to~\cite{Bauer:WhatIsAlgebraic}, while of course
also recommend the seminal papers by Plotkin and
Power~\cite{Plotkin:SemanticsForAlgOperations,Plotkin:NotionsOfComputation}. The brief
summary given here only recalls the essentials and introduces notation.

An \emph{(algebraic) signature} is given by a set $\sig$ of \emph{operation symbols},
and for each $\op \in \sig$ its \emph{operation signature}
$
  \op : A_\op \opto B_\op
$,
where  $A_\op$ and $B_\op$ are called the \emph{parameter} and \emph{arity} set.
A \emph{$\sig$-structure} $\M$ is given by a carrier set $\Mcarrier$, and for each
operation symbol $\op \in \sig$, a map $\op_{\M} : A_\op \times (B_\op \expto \Mcarrier) \to \Mcarrier$,
where~$\expto$ is set exponentiation. The \emph{free $\sig$-structure~$\Tree{\sig}{X}$}
over a set~$X$ is the set of well-founded trees generated inductively by
\begin{itemize}
\item $\retTree{x} \in \Tree{\sig}{X}$, for every $x \in X$, and
\item $\op(a, \kappa) \in \Tree{\sig}{X}$, for every $\op \in \sig$, $a \in A_\op$, and $\kappa : B_\op \to \Tree{\sig}{X}$.
\end{itemize}
We are abusing notation in a slight but standard way, by using $\op$ both as the
name of an operation and a tree-forming constructor.
The elements of $\Tree{\sig}{X}$ are called \emph{computation trees}: a leaf
$\retTree{x}$ represents a pure computation returning a value $x$, while
$\op(a, \kappa)$ represents an effectful computation that calls $\op$ with
parameter~$a$ and continuation~$\kappa$, which expects a result from~$B_\op$.

An \emph{algebraic theory $\Th = (\Thsig, \Theq)$} is given by a \emph{signature~$\Thsig$} and
a set of \emph{equations~$\Theq$}.
The equations $\Theq$ express computational behaviour via interactions between
operations, and
are written in a suitable formalism, e.g.,~\cite{Plotkin:HandlingEffects}. We
explain these by way of examples, as the precise details do not matter for our purposes.
Let $\Zero = \{\,\}$ be the empty set and $\One = \{\star\}$ the standard singleton.

\begin{example}
  \label{ex:state}
  Given a set $C$ of possible states, the theory of \emph{$C$-valued state} has two operations, whose
  somewhat unusual naming will become clear later on,
  \begin{equation*}
    \siggetenv : \One \opto C,
    \qquad\qquad
    \sigsetenv : C \opto \One
  \end{equation*}
  and the equations (where we elide appearances of $\star$):
  \begin{gather*}
    \siggetenv (\lam{c} \sigsetenv(c, \kappa)) = \kappa,
    \qquad
    \sigsetenv (c, \siggetenv\, \kappa) = \sigsetenv(c, \kappa\, c), \\
    \sigsetenv (c, \sigsetenv(c', \kappa)) = \sigsetenv(c', \kappa).
  \end{gather*}
  For example,
  the second equation states that reading state right after setting it to~$c$ gives precisely~$c$.
  The third equation states that $\sigsetenv$ overwrites the state.
\end{example}

\begin{example}
  \label{ex:exceptions}
  Given a set of exceptions $E$, the algebraic theory of \emph{$E$-many exceptions} is given by
  a single operation $\sigraise : E \opto \Zero$, and no equations.
\end{example}

A \emph{$\Th$-model}, also called a \emph{$\Th$-algebra}, is a $\Thsig$-structure which
satisfies the equations in $\Theq$. The \emph{free $\Th$-model} over a set~$X$ is constructed
as the quotient
\begin{equation*}
  \FreeAlg{\Th}{X} = \Tree{\Thsig}{X}/{\sim}
\end{equation*}
by the $\Thsig$-congruence $\sim$ generated by $\Theq$. Each $\op \in \Thsig$
is interpreted in the free model as the map
$(a, \kappa) \mapsto [\op(a, \kappa)]$, where $[{-}]$ is the $\sim$-equivalence class.

$\FreeAlg{\Th}{-}$ is the functor part of a \emph{monad} on sets, whose \emph{unit} at a
set~$X$ is
\begin{equation*}
  \xymatrix@C=3em@R=2em@M=0.5em{
    {X} \ar[r]^(0.35){\retTree{}}
    &
    {\Tree{\Thsig}{X}} \ar@{->>}[r]^{[{-}]}
    &
    {\FreeAlg{\Th}{X}.}
  }
\end{equation*}
The \emph{Kleisli extension} for this monad is then the operation which lifts any map \linebreak
$f : X \to \Tree{\Thsig}{Y}$ to the map $\lift{f} : \FreeAlg{\Thsig}{X} \to \FreeAlg{\Thsig}{Y}$,
given by
\begin{equation*}
  \lift{f}\,[\retTree{x}] \defeq f \, x,
  \qquad\qquad
  \lift{f}\,[\op(a, \kappa)] \defeq [\op(a, \lift{f} \circ \kappa)].
\end{equation*}
That is, $\lift{f}$ traverses a computation tree and replaces each leaf $\retTree{x}$
with $f\,x$.

The preceding construction of free models and the monad may be retro-fitted to an
algebraic signature~$\sig$, if we construe~$\sig$ as an algebraic theory with no
equations. In this case~$\sim$ is just equality, and so we may omit the quotient and the
pesky equivalence classes. Thus the carrier of the free $\sig$-model is the set of
well-founded trees $\Tree{\sig}{X}$, with the evident monad structure.

A fundamental insight of Plotkin and
Power~\cite{Plotkin:SemanticsForAlgOperations,Plotkin:NotionsOfComputation} was that many
computational effects may be adequately described by algebraic theories, with
the elements of free models corresponding to effectful computations. For example, the monads
induced by the theories from \cref{ex:state,ex:exceptions} are respectively isomorphic to the usual \emph{state monad}
$\St{C}\,X \defeq (C \Rightarrow X \times C)$ and the \emph{exceptions monad}
$\Exc{E}\,X \defeq X + E$.

Plotkin and Pretnar~\cite{Plotkin:HandlingEffects} further
observed that the universal property of free models may be used to model a
programming concept known as \emph{handlers}. Given a $\Th$-model $\M$ and a map
$f : X \to \Mcarrier$, the universal property of the free $\Th$-model gives us a
unique $\Th$-homomorphism $\freelift{f} : \FreeAlg{\Th}{X} \to \Mcarrier$ satisfying
\begin{equation*}
  \freelift{f} \, [\retTree{x}] = f\,x,
  \qquad\qquad
  \freelift{f} \, [\op(a, \kappa)] = \op_\M(a, \freelift{f} \circ \kappa).
\end{equation*}

A handler for a theory $\Th$ in a language such as \pl{Eff} amounts to a model 
$\M$ whose carrier $\Mcarrier$
is the carrier $\FreeAlg{\Th'}{Y}$ of the free model
for some other theory~$\Th'$, while the associated handling
construct is the induced $\Th$-homomorphism $\FreeAlg{\Th}{X} \to \FreeAlg{\Th'}{Y}$.
Thus handling transforms computations with effects~$\Th$ to computations with
effects~$\Th'$. There is however no restriction on how a handler implements an
operation, in particular, it may use its continuation in an arbitrary
fashion.
We shall put the universal property of free models to good use as well, while
making sure that the continuations are always used affinely.

\subsection{Runners}
\label{sect:purerunners}

Much like monads, handlers are useful for simulating computational effects, because they
allow us to transform $\Th$-computations to $\Th'$-computations. However, eventually there
has to be a ``top level'' where such transformations cease and actual computational
effects happen. For these we need another concept, known as
\emph{runners}~\cite{Uustalu:Runners}.
Runners are equivalent to the concept of
\emph{comodels}~\cite{Plotkin:TensorsOfModels,Power:Comodels}, which are ``just
models in the opposite category'', although one has to apply the motto
correctly by using powers and co-powers where seemingly exponentials and products would
do. 
Without getting into the intricacies, let us spell out the definition.

\begin{definition}
  A \emph{runner} $\R$ for a signature $\sig$ is given by a carrier set~$\Rcarrier$ together with, for
  each $\op \in \sig$, a
  \emph{co-operation~$\coop_{\R} : A_\op \to (\Rcarrier \expto B_\op \times \Rcarrier)$.}
\end{definition}

Runners are usually defined to have co-operations in the equivalent uncurried form
$\coop_\R : A_\op \times \Rcarrier \to B_\op \times \Rcarrier$, but that is less convenient for our purposes.

Runners may be defined more generally for theories $\Th$, rather than just signatures,
by requiring that the co-operations satisfy $\Theq$. We shall have no use for these,
although we expect no obstacles in incorporating them into our work.

A runner tells us what to do when an effectful computation reaches the top-level runtime
environment. Think of~$\Rcarrier$ as the set of configurations of the runtime environment. Given
the current configuration $c \in \Rcarrier$, the operation $\op(a, \kappa)$ is executed as the
corresponding co-operation $\coop_\R\,a\,c$ whose result $(b, c') \in B_\op \times \Rcarrier$ gives
the result of the operation $b$ and 
the next runtime configuration $c'$. The continuation $\kappa\,b$
then proceeds in runtime configuration~$c'$.

It is not too difficult to turn this idea into a mathematical model. For any
$X$, the co-operations induce a $\sig$-structure $\M$ with
$\Mcarrier \defeq \St{\Rcarrier} X = (\Rcarrier \expto X \times \Rcarrier)$ 
and operations $\op_\M : A_\op \times (B_\op \expto \St{\Rcarrier} X) \to \St{\Rcarrier} X$
given by
\begin{equation*}
  \op_\M (a, \kappa) \defeq \lam{c} \kappa\, (\pi_1 (\coop_\R\,a\,c))\, (\pi_2 (\coop_\R\,a\,c)).
\end{equation*}
We may then use the universal property of the free $\sig$-model to obtain a $\sig$-homomorphism
$\runh_X : \Tree{\sig}{X} \to \St{\Rcarrier} X$ satisfying the equations
\[
  \runh_X(\retTree{x}) = \lam{c} (x, c),
  \qquad\qquad
  \runh_X(\op(a, \kappa)) = \op_\M(a, \runh_X \circ \kappa).
\]
The map $\runh_X$ precisely captures the idea that a runner
\emph{runs computations} by transforming (static) computation trees into
state-passing maps. Note how in the above definition of $\op_\M$, the
continuation~$\kappa$ is used in a controlled way, as it appears precisely once
as the head of the outermost application. In terms of programming, this
corresponds to linear use in a tail-call position.

Runners are less ad-hoc than they may seem. First, notice that $\op_\M$ is just the
composition of the co-operation $\coop_\R$ with
the state monad's Kleisli extension of the continuation $\kappa$, and so is
the standard way of turning \emph{generic effects} into $\sig$-structures~\cite{Plotkin:AlgOperations}.
Second, the map $\runh_X$ is the component at $X$ of a monad morphism
$\runh : \Tree{\sig}{{-}} \to \St{\Rcarrier}$. Møgelberg \& Staton~\cite{Mogelberg:LinearUsageOfState}, as
well as Uustalu~\cite{Uustalu:Runners}, showed that the passage from a runner~$\R$ to the
corresponding monad morphism~$\runh$ forms a one-to-one correspondence between the former and the
latter.

As defined, runners are too restrictive a model of top-level computation, because the only
effect available to co-operations is state, but in practice the runtime
environment may also signal errors and perform other effects, by calling its own runtime
environment. We are led to the following generalisation.

\begin{definition}
  For a signature $\sig$ and monad $\T$, a \emph{$\T$-runner $\R$} for~$\sig$, 
  or just an \emph{effectful runner}, 
  is given by, for each $\op \in \sig$, a \emph{co-operation}
  $\coop_\R : A_\op \to \T B_\op$.
\end{definition}

The correspondence between runners and monad morphisms still holds.

\begin{proposition}
  \label{prop:monadmorphism}
  For a signature $\sig$ and a monad $\T$, the monad morphisms $\Tree{\sig}{{-}} \to \T$
  are in one-to-one correspondence with $\T$-runners for~$\sig$.
\end{proposition}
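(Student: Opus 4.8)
The plan is to exhibit two mutually inverse constructions. In the direction from runners to monad morphisms, given a $\T$-runner $\R$ with co-operations $\coop_\R : A_\op \to \T B_\op$, I would turn each co-operation into an operation on $\T X$ exactly as in the construction of $\runh$ above, but with the arbitrary monad $\T$ in place of $\St{\Rcarrier}$: for each set $X$ and each $\op \in \sig$, define $\op_{\T X}(a, \kappa) \defeq \lift{\kappa}(\coop_\R\, a)$, where $\lift{\kappa} : \T B_\op \to \T X$ is the Kleisli extension of $\kappa : B_\op \to \T X$. This makes $\T X$ into a $\sig$-structure, and the universal property of the free $\sig$-model then yields a unique $\sig$-homomorphism $\theta_X : \Tree{\sig}{X} \to \T X$ extending the unit $\eta^\T_X : X \to \T X$, so that $\theta_X(\retTree{x}) = \eta^\T_X\,x$ and $\theta_X(\op(a, \kappa)) = \lift{(\theta_X \circ \kappa)}(\coop_\R\, a)$. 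I claim $\theta$ is the sought monad morphism.

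To verify that $\theta$ is a monad morphism I would avoid structural induction wherever possible and instead lean on the uniqueness half of the universal property. The unit law $\theta_X \circ \retTree{{-}} = \eta^\T_X$ holds by construction. For naturality and for preservation of Kleisli extension, the key observation is that every $\T$-Kleisli map $\lift{g} : \T X \to \T Y$ (with $g : X \to \T Y$) is automatically a $\sig$-homomorphism between the structures induced by $\R$, since $\lift{g} \circ \lift{\kappa} = \lift{(\lift{g} \circ \kappa)}$ by associativity of Kleisli extension; the same identity shows each $\T f$ is a $\sig$-homomorphism. Hence both composites appearing in naturality, and both sides of the Kleisli-preservation law $\theta_Y \circ \lift{f} = \lift{(\theta_Y \circ f)} \circ \theta_X$ for $f : X \to \Tree{\sig}{Y}$, are $\sig$-homomorphisms out of the free model $\Tree{\sig}{X}$; since they agree on leaves by the unit laws of $\T$, uniqueness forces them to coincide.

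In the other direction, given a monad morphism $\theta : \Tree{\sig}{{-}} \to \T$, I would read off co-operations by composing with the generic effects of the tree monad, setting $\coop_\R\, a \defeq \theta_{B_\op}(\op(a, \retTree{{-}}))$, where $\op(a, \retTree{{-}}) \in \Tree{\sig}{B_\op}$. It then remains to check that the two round-trips are identities. Starting from a runner and returning, the computation $\theta_{B_\op}(\op(a, \retTree{{-}})) = \lift{(\theta_{B_\op} \circ \retTree{{-}})}(\coop_\R\, a) = \lift{(\eta^\T)}(\coop_\R\, a) = \coop_\R\, a$ collapses by the unit laws. Starting from a monad morphism $\theta$ and rebuilding $\theta'$ from the extracted co-operations, I would use the decomposition $\op(a, \kappa) = \lift{\kappa}(\op(a, \retTree{{-}}))$ together with the Kleisli-preservation law for $\theta$ to show that $\theta$ is itself a $\sig$-homomorphism for the structure induced by $\coop_\R$; since $\theta'$ is the \emph{unique} such homomorphism extending $\eta^\T$, uniqueness gives $\theta = \theta'$.

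I expect the main obstacle to be the Kleisli-preservation law for the constructed $\theta$, as this is the only step where the monad laws of $\T$ genuinely do the work. The trick that keeps it routine is to recognise both sides of the law as $\sig$-homomorphisms out of a free model and to discharge their equality by the universal property, rather than grinding through an induction over computation trees; once this is in place, every other verification reduces to the unit and associativity laws of Kleisli extension.
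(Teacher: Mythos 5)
Your proposal is correct and follows essentially the same route as the paper: the same two constructions (the $\sig$-structure $\op_{\T X}(a,\kappa) \defeq \lift{\kappa}(\coop_\R\,a)$ plus the universal property of $\Tree{\sig}{X}$ in one direction, and $\coop_\R\,a \defeq \theta_{B_\op}(\op(a,\retTree{}))$ via generic effects in the other). The paper dismisses the verifications with ``it is not hard to check''; your uniqueness-based discharge of the monad-morphism laws and the round trips is exactly the intended filling-in of those details.
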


\begin{proof}
  This is an easy generalisation of the correspondence for 
  ordinary runners. Let us fix a signature $\sig$, and a monad $\T$ 
  with unit $\eta$ and Kleisli extension $\lift{{-}}$.

  Let $\R$ be a $\T$-runner for $\sig$. For any set $X$, $\R$ induces a $\sig$-structure
  $\M$ with $\Mcarrier \defeq \T X$ and 
  $\op_\M : A_\op \times (B_\op \expto \T X) \to \T X$ defined as
  $
    \op_\M (a, \kappa) \defeq \lift{\kappa} (\coop_R\,a)
  $.
  As before, the universal property of the free model $\Tree{\sig}{X}$ provides a unique
  $\sig$-homomorphism $\runh_X : \Tree{\sig}{X} \to \T X$, satisfying the equations
  \begin{equation*}
    \runh_X (\retTree{x}) = \eta_X(x),
    \qquad\qquad
    \runh_X (\op(a, \kappa)) = \op_\M (a, \runh_X \circ \kappa).
  \end{equation*}
  The maps $\runh_X$ collectively give us the desired monad morphism $\runh$ induced by $\R$.
  
  Conversely, given a monad morphism $\theta : \Tree{\sig}{{-}} \to \T$, we may recover a
  $\T$-runner~$\R$ for $\sig$ by defining the co-operations as
  $
    \coop_\R \, a \defeq \theta_{B_\op} (\op (a, \lam{b} \retTree{b}))
  $.
  It is not hard to check that we have described a one-to-one correspondence.
  \qed
\end{proof}



\section{Programming with runners}
\label{sec:programming-with-runners}

If ordinary runners are not general enough, the effectful ones are too general: 
parameterised by arbitrary monads $\T$, they do
not combine easily and they lack a clear notion of resource management. Thus,
we now engineer more specific monads whose associated runners can be turned into a
programming concept.
While we give up complete generality, the monads presented below are still quite
versatile, as they are parameterised by arbitrary algebraic signatures $\Sigma$,
and so are extensible and support various combinations of effects.

\subsection{The user and kernel monads}
\label{sec:user-kernel-monads}

Effectful source code running inside a runtime environment is just one example of a more
general phenomenon in which effectful computations are enveloped by a layer that provides
a supervised access to external resources: a user process is controlled by a kernel, a web
page by a browser, an operating system by hardware, or a virtual machine, etc. We shall
adopt the parlance of software systems, and refer to the two layers generically as the
\emph{user} and \emph{kernel} code.
Since the two kinds of code need not, and will not, use the same effects, each
will be described by its own algebraic theory and compute in its own monad.

We first address the kernel theory. 
Specifically, we look for an algebraic theory such that effectful runners for the induced monad
satisfy the following desiderata:
\begin{enumerate}
\item Runners support management and controlled finalisation of resources.
\item Runners may use further external resources.
\item Runners may signal failure caused by unavoidable circumstances.
\end{enumerate}

The totality of external resources
available to user code appears as a stateful external environment, even though it
has no direct access to it. Thus, kernel computations should carry state. We
achieve this by incorporating into the kernel theory the operations $\siggetenv$
and $\sigsetenv$, and equations for state from \cref{ex:state}.

Apart from managing state, kernel code should have access to further
effects, which may be true external effects, or some outer
layer of runners. In either case, we should allow the kernel code to call
operations from a given signature~$\sig$.

Because kernel computations ought to be able to signal failure, we should
include an exception mechanism. In practice, many programming languages and
systems have two flavours of exceptions, variously called recoverable and fatal,
checked and unchecked, exceptions and errors, etc. One kind, which we call just
\emph{exceptions}, is raised by kernel code when a situation requires special
attention by user code. The other kind, which we call \emph{signals},
indicates an unrecoverable condition that prevents normal execution of user
code. These correspond precisely to the two standard ways of combining
exceptions with state, namely the coproduct and the tensor of algebraic
theories~\cite{Hyland:CombiningEffects}. The coproduct simply adjoins exceptions
$\sigraise : E \leadsto \Zero$ from \cref{ex:exceptions} to the theory of
state, while the tensor extends the theory of state with signals
$\sigkill : S \leadsto \Zero$, together with equations
\begin{equation}
  \label{eq:kill-state}%
  \siggetenv(\lam{c} \sigkill\,s) = \sigkill\,s,
  \qquad\qquad
  \sigsetenv(c, \sigkill\,s) = \sigkill\,s.
\end{equation}
These equations say that a signal discards state, which makes it unrecoverable.

To summarise, the \emph{kernel theory} $\ThKK{C}{\sig}{E}{S}$ contains 
operations from a signature $\sig$, as well as state operations
$\siggetenv : \One \opto C$, $\sigsetenv : C \opto \One$, exceptions
$\sigraise : E \opto \Zero$, and signals $\sigkill : S \opto \Zero$, with equations for state
from \cref{ex:state}, equations~\eqref{eq:kill-state} relating state and
signals, and for each operation $\op \in \sig$, equations
\begin{align*}
  \siggetenv(\lam{c} \op(a, \kappa\,c)) &= \op(a, \lam{b} \siggetenv (\lam{c} \kappa\,c\,b)),\\
  \sigsetenv(c, \op(a, \kappa)) &= \op(a, \lam{b} \sigsetenv(c, \kappa\,b)),
\end{align*}
expressing that external operations do not interact with kernel state. 
It is not difficult to see that $\ThKK{C}{\sig}{E}{S}$ induces, up to
isomorphism, the \emph{kernel monad}
\begin{equation*}
  \KK{C}{\sig}{E}{S} X \quad\defeq\quad C \expto \Tree{\sig}{((X + E) \times C) + S}.
\end{equation*}

How about user code? It can of course call operations from a 
signature~$\sig$ (not necessarily the same as the kernel code), and because we
intend it to handle exceptions, it might as well have the ability to raise them.
However, user code knows nothing about signals and kernel state. Thus, we choose the \emph{user theory
  $\ThUU{\sig}{E}$} to be the algebraic theory with operations $\sig$, exceptions
$\sigraise : E \opto \Zero$, and no equations. This theory induces the \emph{user
  monad} $\UU{\sig}{E} X \defeq \Tree{\sig}{X + E}$.

\subsection{Runners as a programming construct}
\label{sec:runn-as-progr}

In this section, we turn the ideas presented so far into programming constructs.
We strive for a realistic result,
but when faced with several design options, we prefer simplicity and semantic
clarity. We focus here on translating the central concepts, and postpone
various details to \cref{sect:corecalculus}, where we present a full calculus.

We codify the idea of user and kernel computations by having syntactic
categories for each of them, as well as one for values. We use letters $M$, 
$N$ to indicate user computations, $K$, $L$ for kernel computations, 
and $V$, $W$ for values.

User and kernel code raise exceptions with operation $\tmkw{raise}$, and catch
them with exception handlers based on Benton and Kennedy's \emph{exceptional
  syntax}~\cite{Benton:ExceptionalSyntax},
\begin{equation*}
  \tmtry{M}{\{
    \tmreturn{x} \mapsto N,
    \ldots, \tmraise{e} \mapsto N_e, \ldots
  \}}, 
\end{equation*}
and analogously for kernel code. The familiar binding construct
$\tmlet{x}{M}{N}$
is simply shorthand for
$\tmtry{M}{\{\tmreturn{x} \mapsto N, \ldots, \tmraise{e} \mapsto \tmraise{e}, \ldots\}}$.

As a programming concept, a runner $R$ takes the form
\begin{equation*}
  \tmrunner{(\tm{op}\,x \mapsto K_{\tm{op}})_{\tm{op} \in \sig}}{C}, 
\end{equation*}
where each $K_\op$ is a kernel computation, with the variable $x$ bound in $K_{\tm{op}}$, so that
each clause $\tm{op} \, x \mapsto K_{\tm{op}}$ determines a co-operation for the
kernel monad. The subscript $C$ indicates the type of the state used by  
the kernel code $K_\op$.

The corresponding elimination form is a handling-like construct
\begin{equation}
  \label{eq:using}
  \tmrun{R}{V}{M}{F}, 
\end{equation}
which uses the co-operations of runner $R$ ``at'' initial kernel state~$V$ to
run user code~$M$, and finalises its return value, exceptions, and signals
with~$F$, see~\eqref{eq:finally-clause} below.
When user code $M$ calls an operation $\op$, the enveloping $\tmkw{run}$ construct runs the
corresponding co-operation $K_\op$ of $R$. While doing so, $K_\op$ might raise 
exceptions. But not every exception makes sense for every operation, and so
we assign to each operation $\op$ a set of exceptions $E_\op$ which the
co-operations implementing it may raise, by augmenting its operation signature with $E_\op$, as 
\begin{equation*}
  \op : \tysigop{A_\op}{B_\op}{E_\op}.
\end{equation*}
An exception raised by the co-operation $K_\op$ propagates back to the operation call in
the user code. Therefore, an operation call should have not only a continuation $x\,.\,M$
receiving a result, but also continuations $N_e$, one for each $e \in E_\op$,
\begin{equation*}
  \tmop{op}{}{V}{\tmcont x M}{\tmexccont N e {E_\op}}.
\end{equation*}
If $K_\op$ returns a value $b \in B_\op$, the execution proceeds 
as $M[b/x]$, and as $N_e$ if $K_\op$ raises an exception 
$e \in E_\op$. In examples, we use the generic versions of 
operations~\cite{Plotkin:AlgOperations}, written $\tmgeneff{op}{V}$,
which pass on return values and re-raise exceptions.

One can pass exceptions back to operation calls also in 
a language with handlers, such as \pl{Eff}, by 
changing the signatures of operations to
$A_\op \opto B_\op + E_\op$, and implementing 
the exception mechanism by hand, so that every
operation call is followed by a case distinction on $B_\op + E_\op$.
One is reminded of how operating system calls communicate 
errors back to user code as exceptional values.

A co-operation $K_\op$ may also send a signal, in which case the rest of the user code $M$
is skipped and the control proceeds directly to the corresponding case of the finalisation part~$F$ of the
$\tmkw{run}$ construct~\eqref{eq:using}, whose syntactic form is
\begin{equation}
  \label{eq:finally-clause}%
  \{ \tmreturn{x} \at c \mapsto N,
     \ldots, \tmraise{e} \at c \mapsto N_e,
     \ldots, \tmkill{s} \mapsto N_s,
     \ldots
  \}.
\end{equation}
Specifically, if $M$ returns a value $v$, then $N$ is evaluated with $x$ bound
to $v$ and $c$ to the final kernel state; if~$M$ raises an
exception~$e$ (either directly or indirectly via a co-operation of $R$), 
then $N_e$ is executed, again with $c$ bound to the final kernel state; and 
if a co-operation of $R$ sends a signal $s$, then $N_s$ is executed.

\begin{example}
  \label{ex:file-IO}%
  In anticipation of setting up the complete calculus we show how one can work with files.
  The language implementors can provide an operation $\tm{open}$ which opens a
  file for writing and returns its file handle, an operation $\tm{close}$ which closes a
  file handle, and a runner $\mathsf{fileIO}$ that implements
  writing.
  Let us further suppose that $\mathsf{fileIO}$ may raise
  an exception $\mathsf{QuotaExceeded}$ if a write exceeds the user disk quota,
  and send a signal $\mathsf{IOError}$ 
  if an unrecoverable external error occurs.
  The following code illustrates how to guarantee proper closing of the file:
\begin{lstlisting}
using fileIO @ (open "hello.txt") run
  write "Hello, world."
finally {
  return x @ fh -> close fh,
  raise QuotaExceeded @ fh -> close fh,
  kill IOError -> return () }
\end{lstlisting}
  Notice that the user code does not have direct access to the file handle.
  Instead, the runner holds it in its state, where it is available to the
  co-operation that implements $\tm{write}$. The finalisation block gets access to
  the file handle upon successful completion and raised exception, so it can close
  the file, but when a signal happens the finalisation cannot close the file,
  nor should it attempt to do so.

  We also mention that the code ``cheats'' by placing the call to $\tm{open}$ in a
  position where a value is expected. We should have $\tmkw{let}$-bound the file handle
  returned by $\tm{open}$ outside the $\tmkw{run}$ construct, which would make it clear that
  opening the file happens \emph{before} this construct (and that $\tm{open}$ is
  \emph{not} handled by the finalisation), but would also expose the file handle. Since
  there are clear advantages to keeping the file handle inaccessible, a realistic
  language should accept the above code and hoist computations from value positions
  automatically.
\end{example}



\newcommand{\tmcoop}[3]{\mathtt{\overline{#1}}~#2 \mapsto #3}

\section{A calculus for programming with runners}
\label{sect:corecalculus}

Inspired by the semantic notion of runners and the ideas of the previous
section, we now present a calculus for programming with co-operations and
runners, called $\lambdacoop$. It is a low-level fine-grain call-by-value
calculus~\cite{Levy:CBPV}, and as such could inspire an intermediate language
that a high-level language is compiled to.

\subsection{Types}
\label{sect:types}

\begin{figure}[tb]
  \parbox{\textwidth}{
  \centering
  \small
  \begin{align*}
  \text{Ground type $A$, $B$, $C$}
  \bnfis& \tybase      & &\text{base type} \\
  \bnfor& \tyunit       & &\text{unit type} \\
  \bnfor& \tyempty      & &\text{empty type} \\
  \bnfor& \typrod{A}{B} & &\text{product type} \\
  \bnfor& \tysum{A}{B}  & &\text{sum type}
  \\[1ex]
  \text{Constant signature:}
  \phantom{\bnfis}& \tmconst{f} : (A_1,\ldots,A_n) \to B
  \\[1ex]
  \text{Signature $\sig$}
  \bnfis& \{\op_1, \op_2, \ldots, \op_n\} \subset \Ops
  \\
  \text{Exception set $E$}
  \bnfis& \{e_1, e_2, \ldots, e_n\} \subset \Excs
  \\
  \text{Signal set $S$}
  \bnfis& \{s_1, s_2, \ldots, s_n\} \subset \Sigs
  \\[1ex]
  \text{Operation signature:}
  \phantom{\bnfis}& \op : \tysigop{A_\op}{B_\op}{E_\op}
  \\[1ex]
  \text{Value type $X$, $Y$, $Z$}
  \bnfis& A                                       & &\text{ground type} \\
  \bnfor& \typrod{X}{Y}                           & &\text{product type} \\
  \bnfor& \tysum{X}{Y}                            & &\text{sum type} \\
  \bnfor& \tyfun{X}{\tyuser{Y}{\Ueff}}           & &\text{user function type} \\
  \bnfor& \tyfunK{X}{\tykernel{Y}{\Keff}}         & &\text{kernel function type} \\
  \bnfor& \tyrunner{\sig}{\sig'}{S}{C}      & &\text{runner type}
  \\[1ex]
  \text{User (computation) type:}
  \phantom{\bnfor} &\tyuser{X}{\Ueff} \quad \text{where $\Ueff = (\sig, E)$}
  \\
  \text{Kernel (computation) type:}
  \phantom{\bnfor}& \tykernel{X}{\Keff} \quad \text{where $\Keff = (\sig, E, S, C)$}
  \end{align*}
  } 
  \caption{The types of $\lambdacoop$.}
  \label{fig:lambdacoop-types}
\end{figure}

The types of $\lambdacoop$ are shown in \cref{fig:lambdacoop-types}.
The \emph{ground types} contain \emph{base types}, and are closed under finite sums and
products. These are used in operation signatures and as types of kernel state. (Allowing
arbitrary types in either of these entails substantial complications that can be dealt
with but are tangential to our goals.) Ground types can also come with corresponding 
constant symbols~$\tmconst{f}$, each associated with a fixed \emph{constant signature}
$\tmconst{f} : (A_1,\ldots,A_n) \to B$.

We assume a supply of operation symbols $\Ops$, exception names $\Excs$, and
signal names $\Sigs$. Each operation symbol~$\op \in \Ops$ is equipped with an
\emph{operation signature} $\tysigop{A_\op}{B_\op}{E_\op}$, which specifies its
parameter type~$A_\op$ and arity type~$B_\op$, and the exceptions~$E_\op$ 
that the corresponding co-operations may raise in runners.

The \emph{value types} extend ground types with two 
function types, and a type of runners.
The \emph{user function type $\tyfun{X}{\tyuser{Y}{(\sig, E)}}$} classifies
functions taking arguments of type~$X$ to computations classified by the \emph{user
  (computation) type}~${\tyuser{Y}{(\sig, E)}}$, i.e., those that return values of
type~$Y$, and may call operations~$\sig$ and raise exceptions~$E$.
Similarly, the \emph{kernel function type
  $\tyfunK{X}{\tykernel{Y}{(\sig, E, S, C)}}$} classifies functions taking
arguments of type~$X$ to computations classified by the \emph{kernel
  (computation) type~$\tykernel{Y}{(\sig, E, S, C)}$}, i.e., those that return
values of type~$Y$, and may call operations~$\sig$, raise exceptions~$E$, send
signals~$S$, and use state of type~$C$. We note that the ingredients for user and kernel types
correspond precisely to the parameters of the user monad $\UU{\sig}{E}$ and the
kernel monad $\KK{C}{\sig}{E}{S}$ from \cref{sec:user-kernel-monads}.
Finally, the \emph{runner type $\tyrunner{\sig}{\sig'}{S}{C}$} classifies runners that
implement co-operations for the operations~$\sig$ as kernel computations which use 
operations~$\sig'$, send signals~$S$, and use state of type~$C$.

\subsection{Values and computations}
\label{sec:values-computations}

\begin{figure}[tp]
  \parbox{\textwidth}{
  \centering
  \small
  \abovedisplayskip=0pt
  \begin{align*}
  \intertext{\textbf{Values}}
  V, W
  \bnfis& x                                       & &\text{variable} \\
  \bnfor& \tmconst{f}(V_1,\ldots,V_n)                                       & &\text{ground constant} \\
  \bnfor& \tmunit                                 & &\text{unit} \\
  \bnfor& \tmpair{V}{W}                           & &\text{pair} \\
  \bnfor& \tminl[X,Y]{V} \bnfor \tminr[X,Y]{V}    & &\text{injection} \\
  \bnfor& \tmfun{x : X}{M}                        & &\text{user function} \\
  \bnfor& \tmfunK{x : X}{K}                       & &\text{kernel function} \\
  \bnfor& \tmrunner{(\tm{op}\,x \mapsto K_{\tm{op}})_{\tm{op} \in \sig}}{C}
                                                  & &\text{runner}
  \\[1ex]
  \intertext{\textbf{User computations}}
  M, N
  \bnfis& \tmreturn{V}                            & &\text{value} \\
  \bnfor& V\,W                                    & &\text{application} \\
  \bnfor& \tmtry{M}{
          \{ \tmreturn{x} \mapsto N,
             (\tmraise{e} \mapsto N_e)_{e \in E} \}
          }
                                                  & &\text{exception handler} \\
  \bnfor& \tmmatch{V}{\tmpair{x}{y} \mapsto M}    & &\text{product elimination} \\
  \bnfor& \tmmatch[X]{V}{}                        & &\text{empty elimination} \\
  \bnfor& \tmmatch{V}{\tminl{x} \mapsto M, \tminr{y} \mapsto N}
                                                  & &\text{sum elimination} \\
  \bnfor& \tmop{op}{X}{V}{\tmcont x M}{\tmexccont N e {E_\op}}
                                                  & &\text{operation call} \\
  \bnfor& \tmraise[X]{e}                          & &\text{raise exception} \\
  \bnfor& \tmrun{V}{W}{M}{F} 
                                                  & &\text{running user code} \\
  \bnfor& 
            \tmkernel{K}{W}{F}
                                                  & &\text{switch to kernel mode}
  \\[2ex]
  F \bnfis & \omit \rlap{$\{ \tmreturn{x} \at c \mapsto N, 
                    (\tmraise{e} \at c \mapsto N_e)_{e \in E},
                    (\tmkill{s} \mapsto N_s)_{s \in S} \}$}
  \\[1ex]
  \intertext{\textbf{Kernel computations}}
  K, L
  \bnfis& \tmreturn[C]{V}                         & &\text{value} \\
  \bnfor& V\,W                                    & &\text{application} \\
  \bnfor& \tmtry{K}{
          \{ \tmreturn{x} \mapsto L,
             (\tmraise{e} \mapsto L_e)_{e \in E} \}
          }
                                                  & &\text{exception handler} \\
  \bnfor& \tmmatch{V}{\tmpair{x}{y} \mapsto K}    & &\text{product elimination} \\
  \bnfor& \tmmatch[X \at C]{V}{}                  & &\text{empty elimination} \\
  \bnfor& \tmmatch{V}{\tminl{x} \mapsto K, \tminr{y} \mapsto L}
                                                  & &\text{sum elimination} \\
  \bnfor& \tmop{op}{X}{V}{\tmcont x K}{\tmexccont L e {E_\op}}
                                                  & &\text{operation call} \\
  \bnfor& \tmraise[X \at C]{e}                    & &\text{raise exception} \\
  \bnfor& \tmkill[X \at C]{s}                     & &\text{send signal} \\
  \bnfor& \tmgetenv[C]{\tmcont c K}               & &\text{get kernel state} \\
  \bnfor& \tmsetenv{V}{K}                         & &\text{set kernel state} \\
  \bnfor& \tmuser{M}{
          \begin{aligned}[t]
          \{ &\tmreturn{x} \mapsto K,
             (\tmraise{e} \mapsto L_e)_{e \in E} \}
          \end{aligned}
          }
                                                  & &\text{switch to user mode}
  \end{align*}
  } 
  \caption{Values, user computations, and kernel computations of $\lambdacoop$.}
  \label{fig:lambdacoop-terms}
\end{figure}

The syntax of terms is shown in \cref{fig:lambdacoop-terms}. The
usual fine-grain call-by-value stratification of terms into pure values and effectful
computations is present, except that we further distinguish between \emph{user} and
\emph{kernel} computations.

\subsubsection{Values}
\label{sec:values}

Among the values are variables, constants for ground types, and constructors 
for sums and products. There are two kinds of functions, for abstracting over user and
kernel computations. A \emph{runner} is a value of the form
\begin{equation*}
  \tmrunner{(\tm{op}\,x \mapsto K_{\tm{op}})_{\tm{op} \in \sig}}{C}.
\end{equation*}
It implements co-operations for operations $\tm{\op}$ as kernel
computations~$K_\op$, with $x$ bound in~$K_\op$. The type annotation~$C$
specifies the type of the state that~$K_\op$ uses.
Note that $C$ ranges over ground types, a restriction that allows us to define a naive
set-theoretic semantics.
We sometimes omit these type annotations.

\subsubsection{User and kernel computations}

The user and kernel computations both have pure computations, function application,
exception raising and handling, standard elimination forms, and operation calls.
Note that the typing annotations on some of these differ according to their mode.
For instance, a user operation call is annotated
with the result type~$X$, whereas the annotation $X \at C$ on a kernel operation call
also specifies the kernel state type~$C$.

The binding construct $\tmlet[X ! E]{x}{M}{N}$ is not part of the syntax,
but is an abbreviation for
$
  \tmtry{M}{
  \{ \tmreturn{x} \mapsto N,
     (\tmraise{e} \mapsto \tmraise[X]{e})_{e \in E}
   \}}
$,
and there is an analogous one for kernel computations. We often drop the 
annotation $X ! E$.

Some computations are specific to one or the other mode. Only the kernel mode
may send a signal with $\tmkill{\!}$, and manipulate state with
$\tmkw{getenv}$ and $\tmkw{setenv}$, but only the user mode has the 
$\tmkw{run}$ construct from \cref{sec:runn-as-progr}.
Finally, each mode has the ability to ``context switch'' to the other one.
The kernel computation
\begin{equation*}
\tmuser{M}{
   \{
     \tmreturn{x} \mapsto K,
     (\tmraise{e} \mapsto L_e)_{e \in E}
   \}
}
\end{equation*}
runs a user computation $M$ and handles the returned value and leftover
exceptions with kernel computations $K$ and $L_e$.
Conversely, the user computation
\begin{equation*}
\tmkernel{K}{W}{
  \{x \at c \mapsto M,
    (\tmraise{e} \at c \mapsto N_e)_{e \in E},
    (\tmkill{s} \mapsto N_s)_{s \in S}
  \}
}
\end{equation*}
runs kernel computation $K$ with initial state $W$, and handles the returned value,
and leftover exceptions and signals with user computations $M$, $N_e$, and $N_s$.

\subsection{Type system}
\label{sec:typesystem}

We equip $\lambdacoop$ with a type system akin to type and effect systems for
algebraic effects and handlers~\cite{Bauer:EffectSystem,Benton:ExceptionalSyntax,Kammar:Handlers}.
We are experimenting with resource control, so it makes sense for the type system
to tightly control resources. Consequently, our effect system
does not allow effects to be implicitly propagated outwards.

In \cref{sect:types}, we assumed that each operation $\op \in \Ops$ 
is equipped with some fixed operation signature
$
  \op : \tysigop{A_\op}{B_\op}{E_\op}
$.
We also assumed a fixed constant signature $\tmconst{f} : (A_1, \ldots, A_n) \to B$
for each ground constant $\tmconst{f}$.
We consider this information to be part of the type system and say no more about it.

Values, user computations, and kernel computations each have a corresponding
\emph{typing judgement} form and a \emph{subtyping relation}, given by 
\begin{align*}
  &\Gamma \types V : X,
& &\Gamma \types M : \tyuser{X}{\Ueff},
& &\Gamma \types K : \tykernel{X}{\Keff},\\
  &X \sub Y,
& &\tyuser{X}{\Ueff} \sub \tyuser{Y}{\Veff},
& &\tykernel{X}{\Keff} \sub \tykernel{Y}{\Leff}, 
\end{align*}
where $\Gamma$ is a \emph{typing context} $x_1 : X_1, \ldots, x_n : X_n$.
The effect information is an over-approximation, i.e., $M$ and $K$ employ \emph{at
  most} the effects described by $\Ueff$ and $\Keff$.
The complete rules for these judgements are given in the online appendix. 
We comment here
only on the rules that are peculiar to~$\lambdacoop$, see \cref{fig:typing-selected}.

\begin{figure}[tp]
  \centering
  \small
  \begin{mathpar}
    \coopinfer{Sub-Ground}{ }{A \sub A}
    
    \coopinfer{Sub-Runner}{
      \sig_1' \subseteq \sig_1 \\
      \sig_2 \subseteq \sig_2' \\
      S \subseteq S' \\
      C \equiv C'
    }{
      \tyrunner{\sig_1}{\sig_2}{S}{C} \sub \tyrunner{\sig_1'}{\sig_2'}{S'}{C'}
    }

    \coopinfer{Sub-Kernel}{
      X \sub X' \\
      \sig \subseteq \sig' \\
      E \subseteq E' \\
      S \subseteq S' \\
      C \equiv C'
    }{
      \tykernel{X}{(\sig, E, S, C)} \sub \tykernel{X'}{(\sig', E', S', C')}
    }

  \coopinfer{TyUser-Try}{
    \Gamma \types M : \tyuser{X}{(\sig,E)}
    \\
    \Gamma, x \of X \types N : \tyuser{Y}{(\sig,E')}
    \\
    \big(
      \Gamma \types N_e : \tyuser{Y}{(\sig,E')}
    \big)_{e \in E}
  }{
    \Gamma \types
    \tmtry{M}{
        \{ \tmreturn{x} \mapsto N,
           (\tmraise{e} \mapsto N_e)_{e \in E} \}
        }
    : \tyuser{Y}{(\sig,E')}
  }

  \coopinfer{TyUser-Run}{
    F \equiv
    \{ \tmreturn{x} \at c \mapsto N,
       (\tmraise{e} \at c \mapsto N_e)_{e \in E},
       (\tmkill{s} \mapsto N_s)_{s \in S}
    \}
    \\\\
    \Gamma \types V : \tyrunner{\sig}{\sig'}{S}{C} \\
    \Gamma \types W : C \\\\
    \Gamma \types M : \tyuser{X}{(\sig, E)} \\
    \Gamma, x \of X, c \of C \types N : \tyuser{Y}{(\sig', E')} \\
    \big(
       \Gamma, c \of C \types N_e : \tyuser{Y}{(\sig', E')}
    \big)_{e \in E} \\
    \big(
       \Gamma \types N_s : \tyuser{Y}{(\sig', E')}
    \big)_{s \in S} \\
  }{
    \Gamma \types \tmrun{V}{W}{M}{F} : \tyuser{Y}{(\sig', E')}
  }

  \coopinfer{TyUser-Op}{
    \Ueff \equiv (\sig,E) \\
    \op \in \sig \\
    \Gamma \types V : A_\op \\\\
    \Gamma, x \of B_\op \types M : \tyuser{X}{\Ueff} \\
    \big(
      \Gamma \vdash N_e : \tyuser{X}{\Ueff}
    \big)_{e \in E_\op}
  }{
    \Gamma \types \tmop{op}{X}{V}{\tmcont x M}{\tmexccont N e {E_\op}} : \tyuser{X}{\Ueff}
  }

  \coopinfer{TyKernel-Op}{
    \Keff \equiv (\sig, E, S, C) \\
    \op \in \sig \\
    \Gamma \types V : A_\op \\\\
    \Gamma, x \of B_\op \types K : \tykernel{X}{\Keff} \\
    \big(
      \Gamma \vdash L_e : \tykernel{X}{\Keff}
    \big)_{e \in E_\op}
  }{
    \Gamma \types \tmop{op}{X}{V}{\tmcont x K}{\tmexccont L e {E_\op}} : \tykernel{X}{\Keff}
  }

  \coopinfer{TyUser-Kernel}{
    F \equiv
    \{ \tmreturn{x} \at c \mapsto N,
       (\tmraise{e} \at c \mapsto N_e)_{e \in E},
       (\tmkill{s} \mapsto N_s)_{s \in S}
    \}
    \\\\
    \Gamma \types K : \tykernel{X}{(\sig, E, S, C)} \\
    \Gamma \types W : C \\
    \Gamma, x \of X, c \of C \types N : \tyuser{Y}{(\sig, E')} \\
    \big(
      \Gamma, c \of C \types N_e : \tyuser{Y}{(\sig, E')}
    \big)_{e \in E} \\
    \big(
      \Gamma \types N_s : \tyuser{Y}{(\sig, E')}
    \big)_{s \in S} \\
  }{
    \Gamma \types \tmkernel{K}{W}{F} : \tyuser{Y}{(\sig, E')}
  }

  \coopinfer{TyKernel-User}{
   \Keff \equiv (\sig, E', S, C) \\
   \Gamma \types M : \tyuser{X}{(\sig, E)} \\\\
   \Gamma, x \of X \types K : \tykernel{Y}{\Keff} \\
   \big(
     \Gamma \types L_e : \tykernel{Y}{\Keff}
   \big)_{e \in E}
  }{
    \Gamma \types
    \tmuser{M}{
      \{ \tmreturn{x} \mapsto K,
         (\tmraise{e} \mapsto L_e)_{e \in E}
      \}
    }
    : \tykernel{Y}{\Keff}
  }
  \end{mathpar}
  \caption{Selected typing and subtyping rules.}
  \label{fig:typing-selected}
\end{figure}

Subtyping of ground types \textsc{Sub-Ground} is trivial, as it relates only equal types.
Subtyping of runners \textsc{Sub-Runner} and kernel computations
\textsc{Sub-Kernel} requires equality of the kernel state types~$C$ and~$C'$
because state is used invariantly in the kernel monad.
We leave it for future work to replace ${C \equiv C'}$ with a
\emph{lens}~\cite{Foster:Lenses} from~$C'$ to~$C$, i.e., maps $C' \to C$ and ${C' \times C \to C'}$
satisfying state equations analogous to \cref{ex:state}. It has been
observed~\cite{OConnor:Lens,Power:Comodels} that such a lens in fact amounts to
an ordinary runner for $C$-valued state.

The rules \textsc{TyUser-Op} and \textsc{TyKernel-Op} govern operation calls, where 
we have a success continuation which receives a value returned by a 
co-operation, and exceptional continuations which receive exceptions raised by co-operations.

The rule \textsc{TyUser-Run} requires that the runner $V$ implements \emph{all} the
operations $M$ can use, meaning that operations are \emph{not} implicitly propagated outside a $\tmkw{run}$ block (which is different from how handlers are sometimes implemented). Of course, the co-operations of the runner may call further external operations, as recorded by the signature~$\sig'$. Similarly, we require the finally block~$F$ to intercept all exceptions and signals that might be produced by the co-operations of $V$ or the user code $M$.
Such strict control is exercised throughout. For example, in 
\textsc{TyUser-Run}, \textsc{TyUser-Kernel}, and \textsc{TyKernel-User} we catch all the exceptions and signals that the code might produce.
One should judiciously relax these requirements in a language that is presented to
the programmer, and allow re-raising and re-sending clauses to be automatically inserted.



\subsection{Equational theory}
\label{sect:eqtheory}

We present $\lambdacoop$ as an \emph{equational calculus}, i.e., the interactions between
its components are described by equations. Such a presentation makes it easy to reason
about program equivalence.
There are three equality judgements
\begin{equation*}
\Gamma \types V \equiv W : X, 
\qquad
\Gamma \types M \equiv N : \tyuser{X}{\Ueff}, 
\qquad
\Gamma \types K \equiv L  : \tyuser{X}{\Keff}.
\end{equation*}
It is presupposed that we only compare well-typed expressions with the indicated types.
For the most part, the context and the type annotation on judgements will play no significant role,
and so we shall drop them whenever possible.

We comment on the computational equations for constructs characteristic
of~$\lambdacoop$, and refer the reader to the online appendix for other equations.
When read left-to-right, these equations explain the operational meaning of programs.

Of the three equations for $\tmkw{run}$, the first two specify that returned values and
raised exceptions are handled by the corresponding clauses,
\begin{align*}
  \tmrun{V}{W}{(\tmreturn{V'})}{F} &\equiv N[V'/x, W/c], 
  \\
  \tmrun{V}{W}{(\tmraise[X]{e})}{F} &\equiv N_{e}[W/c],
\end{align*}
where
$F \defeq \{\tmreturn{x} \at c \mapsto N,
   (\tmraise{e} \at c \mapsto N_e)_{e \in E},
   (\tmkill{s} \mapsto N_s)_{s \in S}
\}$.
The third equation below relates running an operation $\op$ with executing the corresponding co-operation~$K_\op$, 
where $R$ stands for the runner
$\tmrunner{(\tm{op}\,x \mapsto K_{\tm{op}})_{\tm{op} \in \sig}}{C}$:
\begin{multline*}
  \tmrun{R}{W}{(
    \tmop{op}{X}{V}{\tmcont x M}{\tmexccont {N'} {e'} {E_\op}}
    )}{F} \equiv {}
  \\
  \begin{aligned}[t]
     &\tmkernel{K_\op[V/x]}{W}{} \\
     &\qquad\big\{
         \begin{aligned}[t]
           &\tmreturn{x} \at c' \mapsto (\tmrun{R}{c'}{M}{F}), \\
           &\left(
               \tmraise{e'} \at c' \mapsto (\tmrun{R}{c'}{N'_{e'}}{F})
             \right)_{e' \in E_\op},\\
           &\left(
               \tmkill{s} \mapsto N_s
             \right)_{s \in S} \big\}
         \end{aligned}
 \end{aligned}
\end{multline*}
Because $K_\op$ is kernel code, it is executed in kernel mode, whose
$\tmkw{finally}$ clauses specify what happens afterwards: if $K_\op$ returns a value, or
raises an exception, execution continues with a suitable continuation, with~$R$
wrapped around it; and if $K_\op$ sends a signal, the corresponding finalisation code from $F$ is
evaluated.

The next bundle describes how kernel code is executed within user code:
\begin{align*}
  \tmkernel{(\tmreturn[C]{V})}{W}{F} &\equiv N[V/x, W/c], \\
  \tmkernel{(\tmraise[X \at C]{e})}{W}{F} &\equiv N_{e}[W/c], \\
  \tmkernel{(\tmkill[X \at C]{s})}{W}{F} &\equiv N_{s}, \\
  \tmkernel{(\tmgetenv[C]{\tmcont c K})}{W}{F} &\equiv \tmkernel{K[W/c]}{W}{F}, \\
  \tmkernel{(\tmsetenv{V}{K})}{W}{F} &\equiv \tmkernel{K}{V}{F}.
\end{align*}
We also have an equation stating that an operation called in kernel mode propagates out to
user mode, with its continuations wrapped in kernel mode:
\begin{multline*}
  \tmkernel{\tmop{op}{X}{V}{\tmcont x K}{\tmexccont L {e'} E}}{W}{F}
  \equiv {} \\
  \tmop{op}{X}{V}{\tmcont x {\tmkernel{K}{W}{F}}}{
    \left(
      \tmkernel{L_{e'}}{W}{F}
    \right)_{e' \in E}}.
\end{multline*}
Similar equations govern execution of user computations in kernel mode.

The remaining equations include standard $\beta\eta$-equations for
exception handling~\cite{Benton:ExceptionalSyntax}, deconstruction of products and sums,
algebraicity equations for operations~\cite{Pretnar:Thesis}, and the equations of kernel theory from \cref{sec:user-kernel-monads}, describing how $\tmkw{getenv}$ and $\tmkw{setenv}$ work, and how they interact with signals and other operations.



\section{Denotational semantics}
\label{sec:denotational-semantics}

We provide a coherent denotational semantics for~$\lambdacoop$, and 
prove it sound with respect to the equational theory given
in \cref{sect:eqtheory}. Having eschewed all forms of recursion,
we may afford to work simply over the category of sets and functions, while noting that
there is no obstacle to incorporating recursion at all levels and switching to domain theory,
similarly to the treatment of effect handlers in~\cite{Bauer:EffectSystem}.

\subsection{Semantics of types}
\label{sec:semantics-types}

The meaning of terms is most naturally defined by structural induction on their typing
derivations, which however are not unique in~$\lambdacoop$ due to subsumption rules.
Thus we must worry about devising a \emph{coherent} semantics, i.e., one in which all derivations
of a judgement get the same meaning.
We follow prior work on the semantics of effect systems for
handlers~\cite{Bauer:EffectSystem}, and proceed by first giving a \emph{skeletal}
semantics of~$\lambdacoop$ in which derivations are manifestly unique because the effect
information is unrefined. We then use the skeletal semantics as the frame upon which rests
a refinement-style coherent semantics of the effectful types of~$\lambdacoop$.

The \emph{skeletal} types are like $\lambdacoop$'s types, but with all effect information
erased. 
In particular, the ground types $A$, and hence the kernel state types $C$, do not change
as they contain no effect information. The skeletal value types are
\begin{equation*}
  P, Q
   \bnfis A
   \mid \tyunit
   \mid \tyempty
   \mid \typrod{P}{Q}
   \mid \tysum{P}{Q}
   \mid \tyfunskel{P}{\tyuserskel{Q}}
   \mid \tyfunKskel{P}{\tykernelskel{Q}{C}}
   \mid \tyrunnerskel{C}.
\end{equation*}
The skeletal versions of the user and kernel types are $\tyuserskel{P}$ and
$\tykernelskel{P}{C}$, respectively. It is best to think of the skeletal types as ML-style
types which implicitly over-approximate effect information by ``any effect is possible'',
an idea which is mathematically expressed by their semantics, as explained below.

First of all, the semantics of ground types is straightforward. One only needs to provide
sets denoting the base types $\tybase$, after which the ground types receive the standard
set-theoretic meaning, as given in \cref{fig:semantics-ground-skeletal-types}.

Recall that $\Ops$, $\Sigs$, and $\Excs$ are the sets of all operations, signals, 
and exceptions, and that each $\op \in \Ops$ has a signature
$\op : \tysigop{A_\op}{B_\op}{E_\op}$.
Let us additionally assume that there is a distinguished operation
$\rip \in \Ops$ with signature $\rip : \tysigop{\One}{\Zero}{\Zero}$ (otherwise
we adjoin it to $\Ops$). It ensures that the denotations of skeletal
user and kernel types are \emph{pointed} sets, while operationally~$\rip$ indicates
a \emph{runtime error}.

Next, we define the \emph{skeletal user and kernel monads} as
\begin{align*}
  \UUskel X &\defeq \UU{\Ops}{\Excs} X= \Tree{\Ops}{X + \Excs}, \\
  \KKskel{C} X &\defeq \KK{C}{\Ops}{\Excs}{\Sigs} X = (C \expto \Tree{\Ops}{(X + \Excs) \times C + \Sigs}),
\end{align*}
and $\RunnerSkel{C}$ as the set of all \emph{skeletal runners $\R$ (with state $C$)}, 
which are families of co-operations
$
  \{\coop_\R : \skelsem{A_\op} \to \KK{C}{\Ops}{E_\op}{\Sigs} \skelsem{B_\op} \}_{\op \in \Ops}
$.
Note that $\KK{C}{\Ops}{E_\op}{\Sigs}$ is a coproduct~\cite{Hyland:CombiningEffects} of monads
$C \expto \Tree{\Ops}{{-} \times C + \Sigs}$ and~$\Exc{E_\op}$, and thus the skeletal
runners are the effectful runners for the former monad, so long as we read the effectful
signatures $\op : \tysigop{A_\op}{B_\op}{E_\op}$ as ordinary algebraic ones $\op : A_\op \leadsto B_\op + E_\op$.
While there is no semantic difference between the two readings, there is one of intention:
$\KK{C}{\Ops}{E_\op}{\Sigs} \skelsem{B_\op}$ is a kernel computation that (apart from
using state and sending signals) returns values of type $B_\op$ and raises
exceptions $E_\op$, whereas $C \expto \Tree{\Ops}{(\skelsem{B_\op} + E_\op) \times C + \Sigs}$ returns values of type
$B_\op + E_\op$ and raises no exceptions. We prefer the former, as it reflects our treatment of exceptions as a
control mechanism rather than exceptional values.

These ingredients suffice for the denotation of skeletal types as sets, as given in
\cref{fig:semantics-ground-skeletal-types}. The user and kernel skeletal types
are interpreted using the respective skeletal monads, and hence the two function types as Kleisli 
exponentials.

\begin{figure}[ht]
  \centering
  \small
  \textbf{Ground types}
  \begin{gather*}
    \skelsem{\tybase} \defeq \cdots
    \qquad\qquad
    \skelsem{\tyunit} \defeq \One
    \qquad\qquad
    \skelsem{\tyempty} \defeq \Zero
    \\
    \skelsem{\typrod{A}{B}} \defeq \skelsem{A} \times \skelsem{B}
    \qquad\qquad
    \skelsem{\tysum{A}{B}} \defeq \skelsem{A} + \skelsem{B}
  \end{gather*}
  \textbf{Skeletal types}
  \begin{gather*}
    \begin{aligned}
      \skelsem{\typrod{P}{Q}} &\defeq \skelsem{P} \times \skelsem{Q}
      &
      \qquad
      \skelsem{\tyfun{P}{\tyuser{Q}{}}} &\defeq \skelsem{P} \expto \skelsem{\tyuser{Q}{}}
      \\
      \skelsem{\tysum{P}{Q}} &\defeq \skelsem{P} + \skelsem{Q}
      &
      \skelsem{\tyfunKskel{P}{\tykernelskel{Q}{C}}} &\defeq \skelsem{P} \expto \skelsem{\tykernelskel{Q}{C}}
    \end{aligned}
    \\
    \skelsem{\tyrunnerskel{C}} \defeq \RunnerSkel{\skelsem{C}}
    \qquad\quad
    \skelsem{\tyuserskel{P}} \defeq \UUskel \skelsem{P}
    \qquad\quad
    \skelsem{\tykernelskel{P}{C}} \defeq \KKskel{\skelsem{C}} \skelsem{P}
    \\
    \skelsem{x_1 : P_1, \ldots, x_n : P_n} \defeq \skelsem{P_1} \times \cdots \times \skelsem{P_n}
  \end{gather*}
  \caption{Denotations of ground and skeletal types.}
  \label{fig:semantics-ground-skeletal-types}
\end{figure}

We proceed with the semantics of effectful types. The \emph{skeleton} of a value
type~$X$ is the skeletal type $\skeleton{X}$ obtained by removing all effect
information, and similarly for user and kernel types, see \cref{fig:skeletons-and-denotations}.
%
%
\begin{figure}[ht]
  \centering
  \small
  \textbf{Skeletons}
  \begin{gather*}
  \skeleton{A} \defeq A
  \qquad
    \skeleton{(\tyrunner{\sig}{\sig'}{S}{C})} \defeq \tyrunnerskel{C}
  \qquad
    \skeleton{(\typrod{X}{Y})} \defeq \typrod{\skeleton{X}}{\skeleton{Y}}
  \\
  \skeleton{(\tyfun{X}{\tyuser{Y}{\Ueff}})} \defeq \tyfunskel{\skeleton{X}}{\skeleton{(\tyuser{Y}{\Ueff})}}
  \qquad
    \skeleton{(\tysum{X}{Y})} \defeq \tysum{\skeleton{X}}{\skeleton{Y}}
  \\
  \skeleton{(\tyfunK{X}{\tykernel{Y}{\Keff}})} \defeq
                                                     \tyfunKskel{\skeleton{X}}{\skeleton{(\tykernel{Y}{\Keff})}}
  \qquad
    \skeleton{(\tyuser{X}{\Ueff})} \defeq \tyuserskel{\skeleton{X}}
  \\
  \skeleton{(x_1 : X_1, \ldots, x_n : X_n)} \defeq (x_1 : \skeleton{X}_1, \ldots, x_n : \skeleton{X}_n)
  \qquad
    \skeleton{(\tykernel{X}{(\Sigma, E, S, C)})} \defeq \tykernelskel{\skeleton{X}}{C}
  \end{gather*}
  \abovedisplayskip=4pt 
  \textbf{Denotations}
  \begin{gather*}
    \begin{aligned}
    \sem{A} &\defeq \skelsem{A}
    &
    \sem{\typrod{X}{Y}} &\defeq \sem{X} \times \sem{X}
    \\
    \sem{\tyrunner{\sig}{\sig'}{S}{C}} &\defeq \Runner{\sig}{\sig'}{S}{\sem{C}}
    &
    \qquad\qquad
    \sem{\tysum{X}{Y}} &\defeq \sem{X} + \sem{X}
    \end{aligned}
    \\
    \begin{aligned}
      \sem{\tyfun{X}{\tyuser{Y}{\Ueff}}} &\defeq
      (\sem{X}, \skelsem{\skeleton{X}}) \subexpto (\sem{\tyuser{Y}{\Ueff}}, \skelsem{\skeleton{(\tyuser{Y}{\Ueff})}})
    \\
      \sem{\tyfunK{X}{\tykernel{Y}{\Keff}}} &\defeq
      (\sem{X}, \skelsem{\skeleton{X}}) \subexpto (\sem{\tykernel{Y}{\Keff}}, \skelsem{\skeleton{(\tykernel{Y}{\Keff})}})
    \end{aligned}
    \\
    \begin{aligned}
      \sem{\tyuser{X}{(\sig,E)}} &\defeq \UU{\sig}{E} \sem{X}
      &
      \sem{\tykernel{X}{(\sig, E, S, C)}} &\defeq \KK{\skelsem{C}}{\sig}{E}{S} \sem{X}
    \end{aligned}
    \\
    \sem{x_1 : X_1, \ldots, x_n : X_n} \defeq \sem{X_1} \times \cdots \times \sem{X_n}
  \end{gather*}
  \caption{Skeletons and denotations of types.}
  \label{fig:skeletons-and-denotations}
\end{figure}
%
%
We interpret a value type~$X$ as a subset
$\sem{X} \subseteq \skelsem{\skeleton{X}}$ of the denotation of its skeleton,
and similarly for user and computation types. In other words, we treat the
effectful types as \emph{refinements} of their skeletons. For this,
we define the operation $(X_0, X_1) \subexpto (Y_0, Y_1)$, for any  
$X_0 \subseteq X_1$ and $Y_0 \subseteq Y_1$, as the set of maps $X_1 \to Y_1$
restricted to $X_0 \to Y_0$: 
\begin{equation*}
  (X_0, X_1) \subexpto (Y_0, Y_1) \defeq
  \{ f : X_1 \to Y_1 \mid \forall x \in X_0 \,.\, f(x) \in Y_0 \}.
\end{equation*}
Next, observe that the user and the kernel monads preserve subset inclusions, in
the sense that $\UU{\sig}{E} X \subseteq \UU{\sig'}{E'} X'$ and
$\KK{C}{\sig}{E}{S} X \subseteq \KK{C}{\sig'}{E'}{S'} X'$ if
$\sig \subseteq \sig'$, $E \subseteq E'$, $S \subseteq S'$, and
$X \subseteq X'$. In particular, we always have
$\UU{\sig}{E} X \subseteq \UUskel X$ and
$\KK{C}{\sig}{E}{S} X \subseteq \KKskel{C} X$.
Finally, let $\Runner{\sig}{\sig'}{S}{C} \subseteq \RunnerSkel{C}$ be the subset of those
runners~$\R$ whose co-operations for $\sig$ factor through $\KK{C}{\sig'}{E_\op}{S}$,
i.e.,
$
  \coop_\R : \skelsem{A_\op} \to \KK{C}{\sig'}{E_\op}{S} \skelsem{B_\op}
  \subseteq \KK{C}{\Ops}{E_\op}{\Sigs} \skelsem{B_\op}, 
$
for each $\op \in \sig$.

Semantics of effectful types is given in \cref{fig:skeletons-and-denotations}.
From a category-theoretic viewpoint, it assigns meaning in the category $\Sub(\Set)$
whose objects are subset inclusions $X_0 \subseteq X_1$ and morphisms from
$X_0 \subseteq X_1$ to $Y_0 \subseteq Y_1$ those maps $X_1 \to Y_1$ that restrict to
$X_0 \to Y_0$. The interpretations of products, sums, and function types are precisely the
corresponding category-theoretic notions $\times$, $+$, and $\subexpto$ in $\Sub(\Set)$.
Even better, the pairs of submonads $\UU{\sig}{E} \subseteq \UUskel$ and
$\KK{C}{\sig}{E}{S} \subseteq \KKskel{C}$ are the ``$\Sub(\Set)$-variants'' of the user
and kernel monads.
Such an abstract point of view drives the interpretation of terms, given
below, and it additionally suggests how our semantics can be set up on top of a
category other than~$\Set$. For example, if we replace $\Set$ with the category $\Cpo$ of
$\omega$-complete partial orders, we obtain the domain-theoretic semantics
of effect handlers from~\cite{Bauer:EffectSystem} that models recursion and operations whose
signatures contain arbitrary types.

\subsection{Semantics of values and computations}
\label{sec:semant-valu-comp}

To give semantics to $\lambdacoop$'s terms, we introduce \emph{skeletal
  typing} judgements
\begin{align*}
  &\Gamma \typesskel V : P,
& &\Gamma \typesskel M : \tyuserskel{P},
& &\Gamma \typesskel K : \tykernelskel{P}{C}, 
\end{align*}
which assign skeletal types to values and computations.
In these judgements, $\Gamma$ is a \emph{skeletal context} which assigns skeletal types to variables.

The rules for these judgements are
obtained from $\lambdacoop$'s typing rules, by \emph{excluding} subsumption rules and by relaxing restrictions on effects. For example, 
the skeletal versions of the rules \textsc{TyValue-Runner} and \textsc{TyKernel-Kill} are
\begin{equation*}
  \infer{
    \big(
      \Gamma, x \of A_\op \typesskel K_\op : \tykernelskel{B_\op}{C}
    \big)_{\op \in \sig}
  }{
    \Gamma \typesskel
    \tmrunner{(\tm{op}\,x \mapsto K_{\tm{op}})_{\tm{op} \in \Sigma}}{C} :
    \tyrunnerskel{C}
  }
  \qquad\qquad
  \infer{
     s \in \Sigs
  }{
     \Gamma \typesskel \tmkill[X \at C]{s} : \tykernelskel{\skeleton X}{C}
  }
\end{equation*}
The relationship between effectful and skeletal typing is summarised as follows:

\begin{proposition}
\label{prop:skeletaltypes}
(1) Skeletal typing derivations are unique.
(2) If $\subty X Y$, then $\skel X = \skel Y$, and analogously for subtyping of user and kernel types.
(3) If ${\Gamma \types V : X}$, then ${\skeleton{\Gamma} \typesskel V : \skeleton{X}}$, and
analogously for user and kernel computations.
\end{proposition}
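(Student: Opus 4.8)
The three claims are largely independent, with (3) invoking (2); all are proved by structural induction, and the work is almost entirely bookkeeping that is forced by the design of the skeletal rules. The one conceptual fact doing the heavy lifting is that both subtyping and the skeleton operation touch only effect information: $\skel{X}$ erases the signatures, exception sets, and signal sets while retaining the underlying ground/structural shape and, crucially, the state type $C$ of kernel and runner types.

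For (1), I would argue that the skeletal system is syntax-directed and proceed by mutual structural induction on the raw term (values, user computations, kernel computations). The skeletal rules are obtained from the effectful ones by discarding the subsumption rules, so the last rule of any skeletal derivation is uniquely determined by the head constructor of the term together with the judgement form (user vs.\ kernel mode); the type annotations carried by the syntax (e.g.\ the $X$ on an operation call, the injection and $\tmkw{match}$ annotations, the state type $C$ on a runner) supply whatever remaining data a rule needs. By the induction hypotheses each immediate subterm has a unique skeletal derivation, hence a unique skeletal type, so the premises — and therefore the conclusion — are forced. Uniqueness of the derivation and of the assigned skeletal type thus follow simultaneously.

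For (2), I would induct on the derivation of $\subty{X}{Y}$, simultaneously with the analogous user and kernel subtyping judgements. \textsc{Sub-Ground} is immediate, since it relates only equal ground types. For \textsc{Sub-Runner} and \textsc{Sub-Kernel} the premises may vary the signatures, exceptions, and signals but require $C \equiv C'$; since the skeleton of a runner or kernel type retains precisely $C$ (and, for kernel types, the skeleton of the value part, which agrees by the induction hypothesis) while forgetting the rest, the two skeletons coincide. For the structural rules (products, sums, and the two function types) the skeleton is formed componentwise, so the claim follows from the induction hypotheses applied to the component subtyping judgements. The substance of the argument is simply the observation that no subtyping rule ever alters the skeleton.

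For (3), I would perform a mutual induction on the effectful typing derivation. The only case of genuine interest is the subsumption rule: given $\Gamma \types V : Y$ obtained from $\Gamma \types V : X$ and $\subty{X}{Y}$, the induction hypothesis yields $\skel{\Gamma} \typesskel V : \skel{X}$, and part (2) gives $\skel{X} = \skel{Y}$, so the same skeletal derivation witnesses $\skel{\Gamma} \typesskel V : \skel{Y}$. Every remaining rule maps to its skeletal counterpart: the structural premises transfer directly via the induction hypotheses (using that $\skel{(\cdot)}$ commutes with the type formers and with contexts), while the effect-side conditions of the skeletal rules are relaxed to the full sets and so hold automatically — for instance $\op \in \sig \subseteq \Ops$ in the operation-call rules, $s \in S \subseteq \Sigs$ for $\tmkw{kill}$, and the corresponding containments for the $\tmkw{run}$, kernel, and context-switch rules. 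I expect no real obstacle beyond the length of this case analysis; the only load-bearing, non-routine step is the subsumption case, which is exactly where part (2) is needed to reconcile the non-uniqueness of effectful derivations with the uniqueness established in part (1).
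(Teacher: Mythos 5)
Your proposal is correct and follows essentially the same route as the paper: (1) by exploiting syntax-directedness, the type annotations, and the absence of skeletal subsumption; (2) by induction on subtyping derivations using that no rule alters the skeleton (in particular $C \equiv C'$ in \textsc{Sub-Runner} and \textsc{Sub-Kernel}); and (3) by translating each effectful rule to its skeletal counterpart and discharging subsumption via (2). The paper's proof is just a terser statement of the same argument.
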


\begin{proof}
  We prove (1) by induction on skeletal typing derivations, and
  (2) by induction on subtyping derivations. 
  For (1), we further use the occasional type annotations, and the 
  absence of skeletal subsumption rules.
  For proving (3), suppose that $\mathcal{D}$ is a derivation of
  $\Gamma \types V : X$. We
  may translate $\mathcal{D}$ to its \emph{skeleton} $\skeleton{\mathcal{D}}$ deriving
  $\skeleton{\Gamma} \typesskel V : \skeleton{X}$ by replacing typing rules
  with matching skeletal ones, skipping subsumption rules due to (2).
  Computations are treated similarly. \qed
\end{proof}

To ensure semantic coherence, we first define the \emph{skeletal semantics} of skeletal
typing judgements, 
$\skelsem{\Gamma \typesskel V : P} : \skelsem{\Gamma} \to \skelsem{P}$,
$\skelsem{\Gamma \typesskel M : \tyuserskel{P}} : \skelsem{\Gamma} \to \skelsem{\tyuserskel{P}}$,
and
$\skelsem{\Gamma \typesskel K : \tykernelskel{P}{C}} : \skelsem{\Gamma} \to \skelsem{\tykernelskel{P}{C}}$, 
by induction on their (unique) derivations.

Provided maps $\skelsem{A_1} \times \cdots \times \skelsem{A_n} \to \skelsem{B}$
denoting ground constants $\tmconst{f}$, 
values are interpreted in a standard
way, using the bi-cartesian closed structure of sets, except for a runner
$\tmrunner{(\op\, x \mapsto K_\op)_{\op \in \sig}}{C}$, which is interpreted at
an environment $\gamma \in \skelsem{\Gamma}$ as the skeletal runner
$\{\coop : \skelsem{A_\op} \to \KK{\skelsem{C}}{\Ops}{E_\op}{\Sigs} \skelsem{B_\op}\}_{\op
  \in \Ops}$, given by
\begin{equation*}
  \coop\,a \defeq
  (\cond
        {\op \in \sig}
        {\rho(\skelsem{\Gamma, x : A_\op \typesskel K_\op : \tykernelskel{B_\op}{C}}(\gamma, a))}
        {\rip}).
\end{equation*}
Here the map 
$\rho : \KKskel{\skelsem{C}} \skelsem{B_\op} \to \KK{\skelsem{C}}{\Ops}{E_\op}{\Sigs}
\skelsem{B_\op}$ is the skeletal kernel theory homomorphism characterised by the equations
\begin{gather*}
  \rho (\retTree \, b) = \retTree \, b, 
  \qquad
  \rho (\op'(a', \kappa, (\nu_e)_{e \in E_{\op'}})) =
     \op'(a', \rho \circ \kappa, (\rho(\nu_e))_{e \in E_{\op'}}), 
  \\
  \begin{aligned}
    \rho (\siggetenv\,\kappa) &= \siggetenv (\rho \circ \kappa), 
    &
    \rho (\sigraise\,e) &= (\cond {e \in E_\op} {\sigraise\, e} {\rip}), 
    \\
    \rho (\sigsetenv (c, \kappa)) &= \siggetenv (c, \rho \circ \kappa), 
    &
    \rho (\sigkill\,s) &= \sigkill\,s.
  \end{aligned}
\end{gather*}
The purpose of $\rip$ in the definition of $\coop$ is to model a runtime error
when the runner is asked to handle an unexpected operation, while~$\rho$ makes sure
that~$\coop$ raises at most the exceptions~$E_\op$, as prescribed by the 
signature of $\op$.

User and kernel computations are interpreted as elements of the corresponding skeletal
user and kernel monads. Again, most constructs are interpreted in a standard way:
$\tmkw{return}$s as the units of the monads; the operations $\tmkw{raise}$,
$\tmkw{kill}$, $\tmkw{getenv}$, $\tmkw{setenv}$, and $\tm{op}$s as the corresponding
algebraic operations; and $\tmkw{match}$ statements as the corresponding semantic
elimination forms. The interpretation of exception handling offers no surprises, e.g., as
in~\cite{Plotkin:HandlingEffects}, as long as we follow the strategy of treating
unexpected situations with the runtime error~$\rip$.

The most interesting part of the interpretation is the semantics of
\begin{align}
  \label{eq:semantics-run}
  & {\Gamma \typesskel (\tmrun{V}{W}{M}{F}) : \tyuserskel{Q}}, 
\end{align}
where $ F \defeq \{ \tmreturn{x} \at c \mapsto N, 
             (\tmraise{e} \at c \mapsto N_e)_{e \in E},
             (\tmkill{s} \mapsto N_s)_{s \in S} \}$.
At an environment $\gamma \in \skelsem{\Gamma}$, $V$ is interpreted as a skeletal runner
with state $\skelsem{C}$, which induces a monad morphism
$\runh : \Tree{\Ops}{-} \to (\skelsem{C} \expto \Tree{\Ops}{{-} \times \skelsem{C} + \Sigs})$, 
as in the proof of \cref{prop:monadmorphism}. 
Let
$f : \KKskel{\skelsem{C}} \skelsem{P} \to (\skelsem{C} \expto \UUskel \skelsem{Q})$ be the skeletal
kernel theory homomorphism characterised by the equations
\begin{gather}
  \label{eq:finally-map}
  \begin{aligned}
    f(\retTree{p}) &= \lam{c} \skelsem{\Gamma, x \of P, c \of C \typesskel N : Q}(\gamma, p, c), 
    \\
    f(\op(a, \kappa, (\nu_e)_{e \in E_\op})) &= \lam{c} \op(a, \lam{b} f (\kappa \, b) \, c, (f(\nu_e)\, c)_{e \in E_\op}), 
    \\
    f(\sigraise\,e) &= \lam{c}
      (\cond
        {e \in E}
        {\skelsem{\Gamma, c : C \typesskel N_e : Q}(\gamma, c)}
        {\rip}), 
    \\
    f(\sigkill\,s) &= \lam{c}
      (\cond
        {s \in S}
        {\skelsem{\Gamma \typesskel N_s : Q}\,\gamma}
        {\rip}), 
  \end{aligned} 
  \\
 \notag
  f(\siggetenv\,\kappa) = \lam{c} f (\kappa \, c) \, c, 
  \qquad\qquad
  f(\sigsetenv(c', \kappa)) = \lam{c} f \, \kappa \, c'.
\end{gather}
The interpretation of~\eqref{eq:semantics-run} at $\gamma$ is
$
  f(\runh_{\skelsem{P} + \Excs} (\skelsem{\Gamma \typesskel M : \tyuserskel{P}} \, \gamma))
  \, (\skelsem{\Gamma \typesskel W : C} \, \gamma)
$,
which reads: map the interpretation of $M$ at~$\gamma$ from the skeletal user
monad to the skeletal kernel monad using~$\runh$ (which models the operations of $M$ by the
cooperations of~$V$), and from there using~$f$ to a map
$\skelsem{C} \expto \UUskel \skelsem{Q}$, that is then applied to the initial kernel state, namely, the
interpretation of $W$ at~$\gamma$.

We interpret the context switch $\Gamma \typesskel \tmkernel{K}{W}{F} : \tyuserskel{Q}$
at an environment $\gamma \in \skelsem{\Gamma}$
as $f(\skelsem{\Gamma \typesskel K : \tykernelskel{P}{C}} \, \gamma)\,
  (\skelsem{\Gamma \typesskel W : C} \, \gamma)$, where $f$ is the map~\eqref{eq:finally-map}. 
Finally, $\tmkw{user}$ context switch is interpreted much like exception handling.

We now define coherent semantics of $\lambdacoop$'s typing derivations by passing
through the skeletal semantics. Given a derivation $\mathcal{D}$ of $\Gamma \types V : X$,
its skeleton $\skeleton{\mathcal{D}}$ derives
$\skeleton{\Gamma} \typesskel V : \skeleton{X}$.
We identify the denotation of~$V$ with the skeletal one,
\begin{equation*}
  \sem{\Gamma \types V : X} \defeq \skelsem{\skeleton{\Gamma} \typesskel V : \skeleton{X}} :
  \skelsem{\skeleton{\Gamma}} \to \skelsem{\skeleton{X}}.
\end{equation*}
All that remains is to check that $\sem{\Gamma \types V : X}$ restricts to
$\sem{\Gamma} \to \sem{X}$. This is accomplished by induction on~$\mathcal{D}$. 
The only interesting step is subsumption, which relies on a further
observation that $\subty{X}{Y}$ implies $\sem{X} \subseteq \sem{Y}$. Typing derivations for user and kernel
computations are treated analogously.

\subsection{Coherence, soundness, and finalisation theorems}
\label{sec:finalisation-theorem}

We are now ready to prove a theorem that guarantees execution of finalisation code. But
first, let us record the fact that the semantics is coherent and sound.

\begin{theorem}[Coherence and soundness]
  \label{thm:soundness-coherence}%
  The denotational semantics of $\lambdacoop$ is coherent, and it is sound for the equational
  theory of~$\lambdacoop$ from \cref{sect:eqtheory}.
\end{theorem}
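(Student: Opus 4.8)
The plan is to prove coherence and soundness together, leaning on the skeletal-semantics scaffolding already erected in \cref{sec:semant-valu-comp}. Coherence is in fact almost immediate given the architecture: since every effectful typing derivation $\mathcal{D}$ of $\Gamma \types V : X$ is mapped to its skeleton $\skeleton{\mathcal{D}}$, and skeletal derivations are unique by \cref{prop:skeletaltypes}(1), the denotation $\sem{\Gamma \types V : X} \defeq \skelsem{\skeleton{\Gamma} \typesskel V : \skeleton{X}}$ depends only on the judgement and not on $\mathcal{D}$. Thus I would first state coherence as a direct corollary of this definition and the fact (already noted in the text) that $\sem{\Gamma \types V : X}$ restricts to a map $\sem{\Gamma} \to \sem{X}$, which was verified by induction on $\mathcal{D}$ with the only nontrivial case being subsumption, handled via $\subty{X}{Y} \implies \sem{X} \subseteq \sem{Y}$.

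For soundness, the goal is to show that each equation $\Gamma \types M \equiv N : \tyuser{X}{\Ueff}$ (and the value and kernel analogues) from \cref{sect:eqtheory} is validated, i.e.\ $\sem{\Gamma \types M : \tyuser{X}{\Ueff}} = \sem{\Gamma \types N : \tyuser{X}{\Ueff}}$. The natural approach is induction on the derivation of the equation: the congruence and equivalence-relation rules follow routinely from the compositional, structurally-defined semantics, so the real content lies in the computational ($\beta$-like) axioms. Here I would exploit the fact that the denotation of an effectful judgement \emph{equals} its skeletal denotation, so it suffices to verify each equation in the skeletal semantics, where there is no effect refinement to track. This reduces every equation to an identity between elements of the skeletal monads $\UUskel$ and $\KKskel{\skelsem{C}}$, or between skeletal runners.

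The substantive work is checking the characteristic equations for $\tmkw{run}$, $\tmkw{kernel}$, and $\tmkw{user}$. For the $\tmkw{run}$ equations I would unfold the interpretation of \eqref{eq:semantics-run}, namely $f(\runh_{\skelsem{P}+\Excs}(\skelsem{M}\,\gamma))\,(\skelsem{W}\,\gamma)$, and use the defining recursion equations of $\runh$ (from the proof of \cref{prop:monadmorphism}) and of the finalisation map $f$ in \eqref{eq:finally-map}. The return and raise cases fall out of the base clauses for $f$; the operation-call case is the crux, and it works precisely because $\runh_X(\op(a,\kappa)) = \op_\M(a, \runh_X \circ \kappa)$ unfolds the co-operation $K_\op$ and then $f$ re-wraps the continuation with the same finalisation $F$, matching the syntactic right-hand side where $R$ is threaded back in. The $\tmkw{kernel}$ equations for $\tmkw{getenv}$, $\tmkw{setenv}$, $\tmkw{raise}$, $\tmkw{kill}$, and operation propagation follow directly from the recursion clauses defining $f$, together with the fact that $f$ is a kernel-theory homomorphism, so it respects exactly the state/signal equations \eqref{eq:kill-state} and those of \cref{ex:state}.

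\textbf{The main obstacle} I anticipate is verifying that $f$ (and $\runh$) are genuinely \emph{well-defined} homomorphisms out of the free kernel model, i.e.\ that the clauses in \eqref{eq:finally-map} respect the kernel-theory congruence $\sim$ generated by $\Theq$: one must check that $f$ sends both sides of each state and signal equation to equal maps $\skelsem{C} \expto \UUskel \skelsem{Q}$, and that $\rho$ likewise descends to the quotient. Once $f$ is known to be \emph{the} induced homomorphism, soundness of the kernel-mode equations is forced by the universal property; but establishing well-definedness requires a direct computation on the defining clauses against each equation of $\ThKK{C}{\sig}{E}{S}$. After that, the algebraicity equations for operations and the $\beta\eta$-laws for exception handling, products, and sums are routine, being immediate consequences of the corresponding universal properties and the bi-cartesian closed structure of $\Set$.
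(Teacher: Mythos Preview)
Your proposal is correct and follows the paper's approach: coherence is immediate from the skeletal-semantics construction (via \cref{prop:skeletaltypes}), and soundness is by unfolding both sides of each axiom with the help of substitution lemmas, exactly as the paper's terse proof indicates. The ``main obstacle'' you flag is overstated: since $\KKskel{\skelsem{C}}\skelsem{P}$ is presented \emph{concretely} as $\skelsem{C}\Rightarrow\Tree{\Ops}{((\skelsem{P}+\Excs)\times\skelsem{C})+\Sigs}$ rather than as a quotient by the kernel-theory congruence, the maps $f$ and $\rho$ may be defined directly by tree-induction on the $\Tree{\Ops}{-}$ component, so no descent-to-quotient check is required (equivalently, the state and signal equations hold automatically in any such state-passing carrier).
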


\begin{proof}
  Coherence is established by construction: any two derivations of the same
  typing judgement have the same denotation because they are both
  (the same) restriction of skeletal semantics.
  For proving soundness, one just needs to unfold
  the denotations of the left- and right-hand sides of equations from \cref{sect:eqtheory},
  and compare them, where some cases rely on suitable substitution lemmas. \qed
\end{proof}

To set the stage for the finalisation theorem, let us 
consider the 
computation $\tmrun{V}{W}{M}{F}$, well-typed by the rule \textsc{TyUser-Run} from
\cref{fig:typing-selected}. At an environment $\gamma \in \sem{\Gamma}$, the finalisation
clauses~$F$ are captured semantically by the \emph{finalisation map}
$\phi_\gamma : (\sem{X} + E) \times \sem{C} + S \to \sem{\tyuser{Y}{(\sig',E')}}$, given by
\begin{align*}
    \phi_\gamma(\iota_1(\iota_1\, x, c)) &\defeq \sem{\Gamma, x \of X, c \of C \types N : \tyuser{Y}{(\sig',E')}}(\gamma, x, c),  \\
    \phi_\gamma(\iota_1(\iota_2\, e, c)) &\defeq \sem{\Gamma, c \of C \types N_e : \tyuser{Y}{(\sig',E')}}(\gamma, c),  \\
    \phi_\gamma(\iota_2(s)) &\defeq \sem{\Gamma \types N_s : \tyuser{Y}{(\sig',E')}} \, \gamma.
\end{align*}
With~$\phi$ in hand, we may formulate the finalisation theorem for $\lambdacoop$,
stating that the semantics of~$\tmrun{V}{W}{M}{F}$ is a computation tree all of
whose branches end with finalisation clauses from~$F$. Thus, unless some enveloping
runner sends a signal, finalisation with $F$ is guaranteed to take place.

\begin{theorem}[Finalisation]
  \label{thm:finalisation}%
  A well-typed $\tmkw{run}$ factors through finalisation:
  \begin{equation*}
    \sem{\Gamma \types (\tmrun{V}{W}{M}{F}) : \tyuser{Y}{(\sig',E')}}\,\gamma = \lift{\phi_\gamma}\,t, 
  \end{equation*}
  for some $t \in \Tree{\sig'}{(\sem{X} + E) \times \sem{C} + S}$.
\end{theorem}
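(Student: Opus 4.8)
The plan is to unfold the skeletal interpretation of $\tmrun{V}{W}{M}{F}$ and exhibit the required tree explicitly. At an environment $\gamma$, this interpretation is $f(\runh_{\skelsem{P}+\Excs}(\skelsem{\Gamma\typesskel M : \tyuserskel P}\,\gamma))\,(\skelsem{\Gamma\typesskel W : C}\,\gamma)$, where $\runh$ is the monad morphism induced by the runner denoted by $V$ (as in the proof of \cref{prop:monadmorphism}) and $f$ is the kernel-theory homomorphism determined by the finalisation clauses $F$. I would set
\[
t \;\defeq\; \runh_{\skelsem P + \Excs}(\sem{\Gamma \types M : \tyuser X {(\sig,E)}}\,\gamma)\,(\sem{\Gamma \types W : C}\,\gamma),
\]
the state-passing computation tree obtained by running $M$ through the co-operations of $V$ at the initial kernel state, and then establish two facts: that $t$ lands in the refined tree type $\Tree{\sig'}{(\sem X + E)\times\sem C + S}$, and that applying $f$ at the initial state coincides with $\lift{\phi_\gamma}\,t$.

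The computational heart is a factorisation lemma for $f$: for every $k \in \KKskel{\skelsem C}\skelsem P = \skelsem C \expto \Tree{\Ops}{(\skelsem P + \Excs)\times\skelsem C + \Sigs}$ and every state $c$, one has $f(k)\,c = \lift{\psi_\gamma}(k\,c)$, where $\psi_\gamma$ is the evident skeletal extension of $\phi_\gamma$ (returning the runtime error $\rip$ on exceptions outside $E$ and signals outside $S$) and $\lift{(-)}$ is Kleisli extension in $\UUskel$. Since $f$ is the unique kernel-theory homomorphism whose value on $\retTree{p}$ is $\lam{c}\skelsem{\Gamma, x\of P, c\of C \typesskel N : \tyuserskel Q}(\gamma,p,c)$, it suffices to check that $k \mapsto \lam{c}\lift{\psi_\gamma}(k\,c)$ is another such homomorphism, i.e.\ that it commutes with the normal-form descriptions of $\siggetenv$, $\sigsetenv$, $\sigraise$, $\sigkill$, and each $\op \in \Ops$ on $\skelsem C \expto \Tree{\Ops}{\cdots}$. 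Each of these is a short calculation: the state operations thread the argument $c$ into the leaves exactly as $f$ prescribes, the leaf cases of $\psi_\gamma$ match the defining equations of $f$ on the unit, $\sigraise$, and $\sigkill$ line by line, and operation nodes are passed through by both sides because $\lift{(-)}$ commutes with $\op$. Instantiating at $k = \runh(\sem M\gamma)$ and $c = \sem W\gamma$ then gives the semantics of the $\tmkw{run}$ as $\lift{\psi_\gamma}\,t$.

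It remains to refine: I would show that $t$ actually lies in $\Tree{\sig'}{(\sem X + E)\times\sem C + S}$, after which $\psi_\gamma$ agrees with $\phi_\gamma$ on all leaves occurring in $t$, and the theorem follows with this $t$. Here the well-typedness assumptions of \textsc{TyUser-Run} do the work: the refined semantics places $\sem M\gamma$ in $\UU{\sig}{E}\sem X = \Tree{\sig}{\sem X + E}$ and $\sem V \gamma$ in $\Runner{\sig}{\sig'}{S}{\sem C}$, so each co-operation factors through $\KK{\sem C}{\sig'}{E_\op}{S}$. Tracing $\runh$ then shows that every $\op$-node of $M$ is replaced by a $\sig'$-tree, that the exceptions $E_\op$ raised by a co-operation are consumed by the Kleisli extension into the operation call's success and exception continuations rather than escaping, that state of type $\sem C$ is threaded, and that the only leaves produced carry either a returned value in $\sem X$ or a top-level exception in $E$ (each paired with a final state in $\sem C$), or a signal in $S$. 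I expect this refinement-tracking step to be the main obstacle, since it requires following the interaction of the monad morphism $\runh$ with the refined submonads and, in particular, verifying that co-operation exceptions are absorbed by continuations and never reach the leaves; by contrast, the factorisation identity is routine once the normal-form kernel operations are made explicit.
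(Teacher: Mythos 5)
Your proposal is correct and takes essentially the same route as the paper: the paper likewise proves the key identity $f\,u\,c = \lift{\phi_\gamma}(u\,c)$ by computational induction on $u$ (which is exactly the universal-property argument you invoke) and then instantiates at $u = \runh(\sem{\Gamma \types M : \tyuser{X}{(\sig,E)}}\,\gamma)$ and $c = \sem{\Gamma \types W : C}\,\gamma$. The only difference is where the refinement bookkeeping lives: the paper states the lemma directly for $u$ ranging over the refined monad $\KK{\sem{C}}{\sig'}{E}{S}\,\sem{X}$, so your separate $\psi_\gamma$-versus-$\phi_\gamma$ step and the tracing of $\runh$ through the refined submonads are already absorbed into the coherence argument of \cref{sec:semant-valu-comp} rather than appearing in the proof of the theorem itself.
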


\begin{proof}
  We first prove that $f \, u \, c = \lift{\phi_\gamma}(u\,c)$ holds for all
  $u \in \KK{\sem{C}}{\sig'}{E}{S} \sem{X}$ and $c \in \sem{C}$, where $f$ is the
  map~\eqref{eq:finally-map}. The proof proceeds by computational induction
  on~$u$~\cite{Plotkin:Logic}. The finalisation statement is then just the special case with
  $u \defeq \runh_{\sem{X} + E} (\sem{\Gamma \types M : \tyuser{X}{(\sig,E)}} \, \gamma)$ and
  $c \defeq \sem{\Gamma \types W : C} \, \gamma$. \qed
\end{proof}



\section{Runners in action}
\label{sect:examples}

Let us show examples that demonstrate how runners can be usefully combined to provide flexible
resource management.
We implemented these and other examples in the language \pl{Coop} and a library
\pl{Haskell-Coop}, see \cref{sect:implementation}.

To make the code more understandable, we do not adhere strictly to the syntax of
$\lambdacoop$, e.g., we use the generic versions of effects~\cite{Plotkin:AlgOperations}, 
as is customary in programming, and effectful initialisation of 
kernel state as discussed in \cref{sec:runn-as-progr}.

\begin{example}[Nesting]
\label{ex:nesting}
  In \cref{ex:file-IO}, we considered a runner $\mathsf{fileIO}$ for basic file
  operations. Let us suppose that $\mathsf{fileIO}$ is implemented by immediate calls to the
  operating system. Sometimes, we might prefer to accumulate writes and
  commit them all at once, which can be accomplished by interposing between $\mathsf{fileIO}$
  and user code the following runner $\mathsf{accIO}$, which accumulates writes in its state:
\begin{lstlisting}
{ write s' -> let s = getenv () in setenv (concat s s') }$_\tm{string}$
\end{lstlisting}
By \emph{nesting} the runners, and calling the outer $\tm{write}$ (the one of $\mathsf{fileIO}$) only in the finalisation
code for $\mathsf{accIO}$, the accumulated writes are commited all at once:
\begin{lstlisting}
using fileIO @ (open "hello.txt") run
  using accIO @ (return "") run
    write "Hello, world."; write "Hello, again."
  finally { return x @ s -> write s; return x }
finally { return x @ fh -> ... , raise QuotaExceeded @ fh -> ... , kill IOError -> ... }
\end{lstlisting}
\end{example}

\begin{example}[Instrumentation]
Above, $\mathsf{accIO}$ implements the same signature as $\mathsf{fileIO}$ and
thus intercepts operations without the user code being aware of it.
This kind of invisibility
can be more generally used to implement \emph{instrumentation}:
\begin{lstlisting}
using { ..., op x -> let c = getenv () in setenv (c+1); op x, ... }$_\tm{int}$ @ (return 0) run
  $M$
finally { return x @ c -> report_cost c; return x, ... }
\end{lstlisting}
Here the interposed runner implements all operations of some enveloping runner,
by simply forwarding them, 
while also measuring computational cost 
by counting the total number of operation calls, 
which is then reported during finalisation.
\end{example}

\begin{example}[ML-style references]
  \label{ex:ml-ref}
  Continuing with the theme of nested runners, they can also be used to implement abstract
  and safe interfaces to low-level resources. For instance, suppose we have a low-level
  implementation of a memory heap that potentially allows unsafe memory access, and we would
  like to implement ML-style references on top of it. A good first attempt is the runner
\begin{lstlisting}
{ ref x -> let h = getenv () in
              let (r,h') = malloc h x in
              setenv h'; return r,
  get r -> let h = getenv () in memread h r,
  put (r, x) -> let h = getenv () in memset h r x }$_\tm{heap}$
\end{lstlisting}
  which has the desired interface, but still suffers from three deficiencies that can be
  addressed with further language support. First, \emph{abstract types} would let us hide the
  fact that references are just memory locations, so that the user code could never devise
  invalid references or otherwise misuse them. Second, our simple typing discipline forces
  all references to hold the same type, but in reality we want them to have different types.
  This could be achieved through quantification over types in the low-level implementation of 
  the heap, as we have done in the \pl{Haskell-Coop} library using \pl{Haskell}'s $\mathsf{forall}$.
  Third, user code could hijack a reference and
  misuse it out of the scope of the runner, which is difficult to prevent. In practice the
  problem does not occur because, so to speak, the runner for references is at the very top level,
  from which user code cannot escape.
\end{example}

\begin{example}[Monotonic state]
Nested runners can also implement access restrictions to resources,
with applications in security~\cite{DelignatLavaud:TLS}. For example, we can restrict
the references from the previous example to be used \emph{monotonically} by associating
a preorder with each reference, which assignments then have to obey.
This idea is similar to how monotonic state is implemented in the $\mathrm{F}^{*}$
language~\cite{Ahman:RecallingAWitness}, except that we make dynamic checks where
$\mathrm{F}^{*}$ statically uses dependent types.

While we could simply modify the previous example, it is better to implement a new runner which
is nested inside the previous one, so that we obtain a modular solution that works with
\emph{any} runner implementing operations $\tm{ref}$, $\tm{get}$, and $\tm{put}$:
\begin{lstlisting}
{ mref x rel -> let r = ref x in
                    let m = getenv () in
                    setenv (add m (r,rel)); return r,
  mget r -> get r,
  mput (r, y) -> let x = get r in
                     $\,$let m = getenv () in
                     $\,$match (sel m r) with
                     $\,$| inl rel -> if (rel x y) then put (r, y)
                                                 $\,\,$else raise MonotonicityViolation
                     $\,$| inr () -> kill NoPreoderFound }$_{\tm{map} (\tm{ref}, \tm{intRel})}$
\end{lstlisting}
The runner's state is a map from references to preorders on integers. The co-operation $\tm{mref~x~rel}$ creates a new
reference $\tm{r}$ initialised with~$\tm{x}$ (by calling $\tm{ref}$ of the outer runner), and 
then adds the pair $\tm{(r,rel)}$ to the map stored in the runner's state. Reading is delegated to the outer runner, while assignment 
first checks that the new state is larger than the old one, according to the associated preorder. If the
preorder is respected, the runner proceeds with assignment (again delegated to the outer runner), otherwise it reports a
monotonicity violation. We may not assume that every reference has an associated preorder,
because user code could pass to $\tm{mput}$ a reference that was created earlier outside 
the scope of the runner. If this happens, the runner simply kills the offending user code with
a signal.
\end{example}

\begin{example}[Pairing]
\label{ex:pairing}
  Another form of modularity is achieved by \emph{pairing} runners. Given two runners
  $\tmrunner{(\tm{op}\,x \mapsto K_{\tm{op}})_{\tm{op} \in \sig_1}}{C_1}$ and
  $\tmrunner{(\tm{op'}\,x \mapsto K_{\tm{op'}})_{\tm{op'} \in \sig_2}}{C_2}$, e.g., for state
  and file operations, we can use them side-by-side by combining them into a single runner
  with operations $\sig_1 + \sig_2$ and kernel state $C_1 \times C_2$, as follows (the
  co-operations $\tm{op}'$ of the second runner are treated symmetrically):
\begin{lstlisting}
{ op x -> let (c,c') = getenv () in
             user
               kernel ($K_\op$ x) @ c finally {
                  return y @ c'' -> return (inl (inl y, c'')),
                  (raise e @ c'' -> return (inl (inr e, c'')))$_{e \in E_\op}$, 
                  (kill s -> return (inr s))$_{s \in S_1}$}
             with {
               return (inl (inl y, c'')) -> setenv (c'', c'); return y,
               return (inl (inr e, c'')) -> setenv (c'', c'); raise e,
               return (inr s) -> kill s},
  op' x -> ... , ... }$_{C_1 \times C_2}$
\end{lstlisting}
Notice how the inner $\tmkw{kernel}$ context switch passes to the co-operation~$K_\op$ 
only its part of the combined state, and
how it returns the result of~$K_\op$ in a reified form (which requires treating exceptions and signals 
as values). The outer $\tmkw{user}$ context switch then receives this reified result, updates the combined state,
and forwards the result (return value, exception, or signal) in unreified form.
\end{example}



\section{Implementation}
\label{sect:implementation}

We accompany the theoretical development with two implementations of~$\lambdacoop$: a
prototype language \pl{Coop}~\cite{bauer19:Coop}, and a \pl{Haskell} library
\pl{Haskell-Coop}~\cite{ahman19:HaskellCoop}.

\pl{Coop}, implemented in \pl{OCaml}, demonstrates what a more fully-featured
language based on $\lambdacoop$ might look like.
It implements a bi-directional variant of $\lambdacoop$'s type system, extended
with type definitions and algebraic datatypes, to provide algorithmic typechecking and
type inference. The operational semantics is based on the computation rules of the
equational theory from \cref{sect:eqtheory}, but extended with general
recursion, pairing of runners from \cref{ex:pairing}, and an interface to the 
\pl{OCaml} runtime called \emph{containers}---these are essentially top-level runners
defined directly in \pl{OCaml}. They are a modular and systematic way of
offering several possible top-level runtime environments to the programmer.

The \pl{Haskell-Coop} library is a shallow embedding of $\lambdacoop$ in 
\pl{Haskell}. The implementation closely follows the denotational
semantics of~$\lambdacoop$. For instance, user and kernel monads are
implemented as corresponding \pl{Haskell} monads. Internally, the library
uses the \pl{Freer} monad of Kiselyov~\cite{Kiselyov:Freer}
to implement free model monads for given signatures of operations.
The library also provides a means to run user code via \pl{Haskell}'s top-level monads.
For instance, code that performs input-output operations may be run in \pl{Haskell}'s $\tm{IO}$ monad.

\pl{Haskell}'s advanced features
make it possible to use \pl{Haskell-Coop} to implement several 
extensions to examples from \cref{sect:examples}.
For instance, we implement ML-style state that allow references
holding arbitrary values (of different types), and state 
that uses \pl{Haskell}'s type system to track
which references are alive.
The library also provides pairing of runners from \cref{ex:pairing}, e.g., 
to combine state and input-output.
We also use the library to demonstrate that
\emph{ambient functions} from the \pl{Koka} language~\cite{Leijen:Ambients}
can be implemented with runners by treating their binding
and application as co-operations.
(These are functions that are bound dynamically but evaluated
in the lexical scope of their binding.)



\section{Related work}
\label{sect:relatedwork}

Comodels and (ordinary) runners have been used as a natural
model of stateful top-level behaviour.
For instance, Plotkin and Power~\cite{Plotkin:TensorsOfModels} have given a treatment of operational 
semantics using the tensor product of a model and a comodel.
Recently, Katsumata, Rivas, and Uustalu have generalised this interaction of models and comodels
to monads and comonads~\cite{Katsumata:InteractionLaws}.
An early version of \pl{Eff}~\cite{Bauer:AlgebraicEffects} implemented \emph{resources},
which were a kind of stateful runners, although they lacked satisfactory theory.
Uustalu~\cite{Uustalu:Runners} has pointed out that runners are the additional
structure that one has to impose on state
to run algebraic effects statefully.
Møgelberg and Staton's~\cite{Mogelberg:LinearUsageOfState} linear-use state-passing
translation also relies on equipping the state with a comodel
structure for the effects at hand. Our runners arise
when their setup is specialised to a certain Kleisli adjunction.

Our use of kernel state is analogous to the use
of parameters in parameter-passing 
handlers~\cite{Plotkin:HandlingEffects}: their 
$\tmkw{return}$ clause also provides a form of finalisation, as the 
final value of the parameter is available.
There is however no guarantee of
finalisation happening because handlers need not use the
continuation linearly.

The need to tame the excessive generality of handlers, and willingness to
give it up in exchange for efficiency and predictability, has recently
been recognised by \pl{Multicore OCaml}'s implementors, who
have observed that in practice most handlers resume 
continuations precisely once~\cite{Dolan:MulticoreOCaml}. In exchange for impressive efficiency, they require
continuations to be used linearly by default, whereas discarding and copying
must be done explicitly, incurring additional cost.
Leijen~\cite{Leijen:Finalisation} has extended 
handlers in \pl{Koka} with a $\tmkw{finally}$ clause, whose 
semantics ensures that finalisation happens whenever a handler
discards its continuation.
Leijen also added an $\tmkw{initially}$ clause to
parameter-passing handlers, which is used to compute the initial value of the parameter before handling, but that 
gets executed again every time the handler resumes its continuation.



\section{Conclusion and future work}
\label{sect:conclusion}

We have shown that effectful runners form a mathematically natural
and modular model of resources, modelling not only the top level external resources,
but allowing programmers to also define their own intermediate ``virtual machines''.
Effectful runners give rise to a bona fide programming concept, an idea we have captured
in a small calculus, called $\lambdacoop$, which we have implemented both as a
language and a library. We have given $\lambdacoop$
an algebraically natural denotational semantics, and shown how to program
with runners through various examples.

We leave combining runners and general effect handlers for future work.
As runners are essentially affine handlers, inspired by \pl{Multicore OCaml}
we also plan to investigate efficient compilation for runners.
On the theoretical side, by developing semantics in a $\Sub(\Cpo)$-enriched
setting~\cite{Power:EnrichedLawvereTheories}, we plan to support recursion at
all levels, and remove the distinction between ground and arbitrary types.
Finally, by using proof-relevant subtyping~\cite{Saleh:ExplicitSybTyping} and
synthesis of lenses~\cite{Miltner:SynthesizingSymmLenses}, we plan to upgrade 
subtyping from a simple inclusion to relating types by lenses.


\paragraph{Acknowledgements}
We thank Daan Leijen for useful discussions about initialisation and finalisation in \pl{Koka}, 
as well as ambient values and ambient functions. We thank  
Guillaume Munch-Maccagnoni and Matija Pretnar for discussing  
resources and potential future directions for $\lambdacoop$. We are also grateful to the participants 
of the NII Shonan Meeting ``Programming and reasoning with algebraic effects 
and effect handlers'' for feedback on an early version of this work.

\vspace{-10pt}


\begin{flushleft}
\begin{tabular}{@{}p{0.85\textwidth} l@{}}
This project has received funding from the European Union’s Horizon 2020 research and innovation programme under the Marie Sk\l{}odowska-Curie grant agreement No 834146.
&
\raisebox{-0.8cm}{
\includegraphics[width=1.5cm]{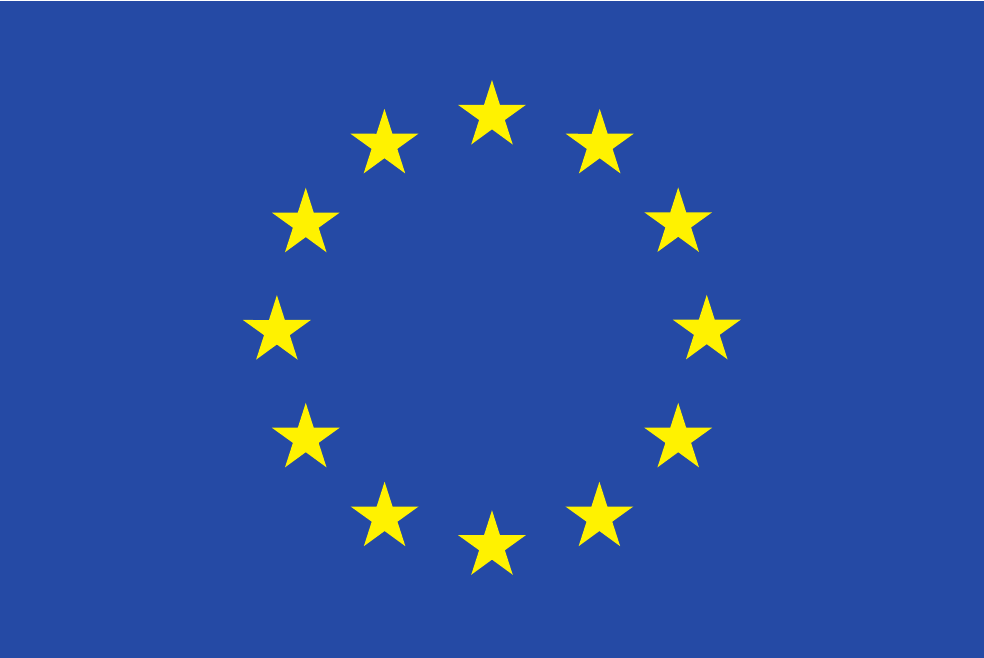}
}
\\
\multicolumn{2}{p{\textwidth}}{
\quad
This material is based upon work supported by the Air Force Office of Scientific Research under award number FA9550-17-1-0326.
}
\end{tabular}
\end{flushleft}


\bibliography{references}

\vfill

{\small\medskip\noindent{\bf Open Access} This chapter is licensed under the terms of the Creative Commons\break Attribution 4.0 International License (\url{http://creativecommons.org/licenses/by/4.0/}), which permits use, sharing, adaptation, distribution and reproduction in any medium or format, as long as you give appropriate credit to the original author(s) and the source, provide a link to the Creative Commons license and indicate if changes were made.}

{\small \spaceskip .28em plus .1em minus .1em The images or other third party material in this chapter are included in the chapter's Creative Commons license, unless indicated otherwise in a credit line to the material.~If material is not included in the chapter's Creative Commons license and your intended\break use is not permitted by statutory regulation or exceeds the permitted use, you will need to obtain permission directly from the copyright holder.}

\medskip\noindent\includegraphics{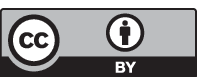}

\appendix
\makeatletter\def\@seccntformat#1{\appendixname~\csname the#1\endcsname: }\makeatother

\section{Typing rules of $\lambdacoop$}
\label{sec:typing-rules}

In this appendix we give the complete typing rules for $\lambdacoop$. We refer to
\cref{fig:lambdacoop-types,fig:lambdacoop-terms} for the syntax of types, values, 
and user and kernel computations.
For each operation symbol $\op \in \Ops$, we assume a given and fixed operation signature
\begin{equation*}
  \op : \tysigop{A_\op}{B_\op}{E_\op},
\end{equation*}
and for each ground constant $\tmconst{f}$, we assume 
a signature $\tmconst{f} : (A_1,\ldots,A_n) \to B$, 
both of which the typing rules refer to without further ado.
Values, and user and kernel computations each have a typing and a subtyping judgement of the
form
\begin{align*}
  &\Gamma \types V : X,
& &\Gamma \types M : \tyuser{X}{\Ueff},
& &\Gamma \types K : \tykernel{X}{\Keff},\\
  &X \sub Y,
& &\tyuser{X}{\Ueff} \sub \tyuser{Y}{\Veff},
& &\tykernel{X}{\Keff} \sub \tykernel{Y}{\Leff}.
\end{align*}
where $\Gamma$ is the customary typing context assigning value types to variables. 
The subtyping rules are given in \cref{fig:subtyping}, and the 
typing rules in \cref{fig:typing-values,fig:typing-user,fig:typing-kernel}.

\begin{figure}[p]
  \small
  \begin{mathpar}
    \coopinfer{Sub-Ground}{ }{A \sub A}

    \coopinfer{Sub-Product}{
      X \sub X' \\
      Y \sub Y'
    }{
      X \times Y \sub X' \times Y'
    }

    \coopinfer{Sub-Sum}{
      X \sub X' \\
      Y \sub Y'
    }{
      X + Y \sub X' + Y'
    }

    \coopinfer{Sub-UserFun}{
      X' \sub X \\
      \tyuser{Y}{\Ueff} \sub \tyuser{Y'}{\Ueff'}
    }{
      \tyfun{X}{\tyuser{Y}{\Ueff}} \sub \tyfun{X'}{\tyuser{Y'}{\Ueff'}}
    }

    \coopinfer{Sub-KernelFun}{
      X' \sub X \\
      \tykernel{Y}{\Keff} \sub \tykernel{Y'}{\Keff'}
    }{
      \tyfunK{X}{\tykernel{Y}{\Keff}} \sub \tyfunK{X'}{\tykernel{Y'}{\Keff'}}
    }

    \coopinfer{Sub-Runner}{
      \sig_1' \subseteq \sig_1 \\
      \sig_2 \subseteq \sig_2' \\
      S \subseteq S' \\
      C \equiv C'
    }{
      \tyrunner{\sig_1}{\sig_2}{S}{C} \sub \tyrunner{\sig_1'}{\sig_2'}{S'}{C'}
    }

    \coopinfer{Sub-User}{
      X \sub X' \\
      \sig \subseteq \sig' \\
      E \subseteq E'
    }{
      \tyuser{X}{(\sig, E)} \sub \tyuser{X'}{(\sig', E')}
    }

    \coopinfer{Sub-Kernel}{
      X \sub X' \\
      \sig \subseteq \sig' \\
      E \subseteq E' \\\\
      S \subseteq S' \\
      C \equiv C'
    }{
      \tykernel{X}{(\sig, E, S, C)} \sub \tykernel{X'}{(\sig', E', S', C')}
    }

    \coopinfer{Subsume-Value}{
      \Gamma \types V : X \\
      X \sub X'
    }{
      \Gamma \types V : X'
    }

    \coopinfer{Subsume-User}{
      \Gamma \types M : \tyuser{X}{\Ueff} \\
      \tyuser{X}{\Ueff} \sub \tyuser{X'}{\Ueff'}
    }{
      \Gamma \types M : \tyuser{X'}{\Ueff'}
    }

    \coopinfer{Subsume-Kernel}{
      \Gamma \types K : \tykernel{X}{\Keff} \\
      \tykernel{X}{\Keff} \sub \tykernel{X'}{\Keff'}
    }{
      \Gamma \types M : \tykernel{X'}{\Keff'}
    }
  \end{mathpar}
  \caption{Subtyping and subsumption rules.}
  \label{fig:subtyping}
\end{figure}

\begin{figure}[p]
  \centering
  \small
\begin{mathpar}

  \coopinfer{TyValue-Var}{
    \Gamma(x) \equiv X
  }{
    \Gamma \types x : X
  }
  
  \coopinfer{TyValue-Const}{
    (\Gamma \types V_i : A_i)_{1 \leq i \leq n}
  }{
    \Gamma \types \tmconst{f}(V_1, \ldots , V_n) : B
  }

  \coopinfer{TyValue-Unit}{
  }{
    \Gamma \types \tmunit : \tyunit
  }

  \coopinfer{TyValue-Pair}{
    \Gamma \types V : X \\
    \Gamma \types W : Y
  }{
    \Gamma \types \tmpair{V}{W} : \typrod{X}{Y}
  }

  \coopinfer{TyValue-Inl}{
    \Gamma \types V : X
  }{
    \Gamma \types \tminl[X,Y]{V} : X + Y
  }

  \coopinfer{TyValue-Inr}{
    \Gamma \types W : Y
  }{
    \Gamma \types \tminr[X,Y]{W} : X + Y
  }

  \coopinfer{TyValue-UserFun}{
    \Gamma, x \of X \types M : \tyuser{Y}{\Ueff}
  }{
    \Gamma \types \tmfun{x : X}{M} : \tyfun{X}{\tyuser{Y}{\Ueff}}
  }

  \coopinfer{TyValue-KernelFun}{
    \Gamma, x \of X \types K : \tykernel{Y}{\Keff}
  }{
    \Gamma \types \tmfunK{x : X}{K} : \tyfunK{X}{\tykernel{Y}{\Keff}}
  }

  \coopinfer{TyValue-Runner}{
    \big(
      \Gamma, x \of A_\op \types K_\op : \tykernel{B_\op}{(\sig', E_\op, S, C)}
    \big)_{\op \in \sig}
  }{
    \Gamma \types
    \tmrunner{(\tm{op}\,x \mapsto K_{\tm{op}})_{\tm{op} \in \Sigma}}{C} :
    \tyrunner{\sig}{\sig'}{S}{C}
  }
\end{mathpar}
  \caption{Value typing rules.}
  \label{fig:typing-values}
\end{figure}

\begin{figure}[tp]
  \centering
  \small
\begin{mathpar}
  \coopinfer{TyUser-Return}{
    \Gamma \types V : X
  }{
    \Gamma \types \tmreturn{V} : \tyuser{X}{\Ueff}
  }

  \coopinfer{TyUser-Apply}{
    \Gamma \types V : \tyfun{X}{\tyuser{Y}{\Ueff}} \\
    \Gamma \types W : X
  }{
    \Gamma \types \tmapp{V}{W} : \tyuser{Y}{\Ueff}
  }

  \coopinfer{TyUser-Try}{
    \Gamma \types M : \tyuser{X}{(\sig,E)}
    \\
    \Gamma, x \of X \types N : \tyuser{Y}{(\sig,E')}
    \\
    \big(
      \Gamma \types N_e : \tyuser{Y}{(\sig,E')}
    \big)_{e \in E}
  }{
    \Gamma \types
    \tmtry{M}{
        \{ \tmreturn{x} \mapsto N,
           (\tmraise{e} \mapsto N_e)_{e \in E} \}
        }
    : \tyuser{Y}{(\sig,E')}
  }

  \coopinfer{TyUser-MatchPair}{
    \Gamma \types V : \typrod{X}{Y} \\
    \Gamma, x \of X, y \of Y \types M : \tyuser{Z}{\Ueff}
  }{
    \Gamma \types \tmmatch{V}{\tmpair{x}{y} \mapsto M} : \tyuser{Z}{\Ueff}
  }

  \coopinfer{TyUser-MatchEmpty}{
    \Gamma \types V : \tyempty
  }{
    \Gamma \types \tmmatch[Z]{V}{} : \tyuser{Z}{\Ueff}
  }

  \coopinfer{TyUser-MatchSum}{
    \Gamma \types V : X + Y \\
    \Gamma, x \of X \types M : \tyuser{Z}{\Ueff} \\
    \Gamma, y \of Y \types N : \tyuser{Z}{\Ueff} \\
  }{
    \Gamma \types \tmmatch{V}{\tminl{x} \mapsto M, \tminr{y} \mapsto N} : \tyuser{Z}{\Ueff}
  }

  \coopinfer{TyUser-Op}{
    \Ueff \equiv (\sig,E) \\
    \op \in \sig \\
    \Gamma \types V : A_\op \\\\
    \Gamma, x \of B_\op \types M : \tyuser{X}{\Ueff} \\
    \big(
      \Gamma \vdash N_e : \tyuser{X}{\Ueff}
    \big)_{e \in E_\op}
  }{
    \Gamma \types \tmop{op}{X}{V}{\tmcont x M}{\tmexccont N e {E_\op}} : \tyuser{X}{\Ueff}
  }

  \coopinfer{TyUser-Raise}{
     e \in E
  }{
     \Gamma \types \tmraise[X]{e} : \tyuser{X}{(\sig, E)}
  }

  \coopinfer{TyUser-Run}{
    F \equiv
    \{ \tmreturn{x} \at c \mapsto N,
       (\tmraise{e} \at c \mapsto N_e)_{e \in E},
       (\tmkill{s} \mapsto N_s)_{s \in S}
    \}
    \\\\
    \Gamma \types V : \tyrunner{\sig}{\sig'}{S}{C} \\
    \Gamma \types W : C \\\\
    \Gamma \types M : \tyuser{X}{(\sig, E)} \\
    \Gamma, x \of X, c \of C \types N : \tyuser{Y}{(\sig', E')} \\
    \big(
       \Gamma, c \of C \types N_e : \tyuser{Y}{(\sig', E')}
    \big)_{e \in E} \\
    \big(
       \Gamma \types N_s : \tyuser{Y}{(\sig', E')}
    \big)_{s \in S} \\
  }{
    \Gamma \types \tmrun{V}{W}{M}{F} : \tyuser{Y}{(\sig', E')}
  }

  \coopinfer{TyUser-Kernel}{
    F \equiv
    \{ \tmreturn{x} \at c \mapsto N,
       (\tmraise{e} \at c \mapsto N_e)_{e \in E},
       (\tmkill{s} \mapsto N_s)_{s \in S}
    \}
    \\\\
    \Gamma \types K : \tykernel{X}{(\sig, E, S, C)} \\
    \Gamma \types W : C \\
    \Gamma, x \of X, c \of C \types N : \tyuser{Y}{(\sig, E')} \\
    \big(
      \Gamma, c \of C \types N_e : \tyuser{Y}{(\sig, E')}
    \big)_{e \in E} \\
    \big(
      \Gamma \types N_s : \tyuser{Y}{(\sig, E')}
    \big)_{s \in S} \\
  }{
    \Gamma \types \tmkernel{K}{W}{F} : \tyuser{Y}{(\sig, E')}
  }
\end{mathpar}
  \caption{User typing rules.}
  \label{fig:typing-user}
\end{figure}

\begin{figure}[tp]
  \centering
  \small
\begin{mathpar}

  \coopinfer{TyKernel-Return}{
    \Gamma \types V : X
  }{
    \Gamma \types \tmreturn[C]{V} : \tykernel{X}{(\sig, E, S, C)}
  }

  \coopinfer{TyKernel-Apply}{
    \Gamma \types V : \tyfun{X}{\tykernel{Y}{\Keff}} \\
    \Gamma \types W : X
  }{
    \Gamma \types \tmapp{V}{W} : \tykernel{Y}{\Keff}
  }

  \coopinfer{TyKernel-Try}{
    \Gamma \types K : \tykernel{X}{(\sig, E, S, C)}
    \\
    \Gamma, x \of X \types L : \tykernel{Y}{(\sig, E', S, C)}
    \\
    \big(
      \Gamma \types L_e : \tykernel{Y}{(\sig, E', S, C)}
    \big)_{e \in E}
  }{
    \Gamma \types
    \tmtry{K}{
        \{ \tmreturn{x} \mapsto L,
           (\tmraise{e} \mapsto L_e)_{e \in E} \}
        }
    : \tykernel{Y}{(\sig, E', S, C)}
  }

  \coopinfer{TyKernel-MatchPair}{
    \Gamma \types V : \typrod{X}{Y} \\
    \Gamma, x \of X, y \of Y \types K : \tykernel{Z}{\Keff}
  }{
    \Gamma \types \tmmatch{V}{\tmpair{x}{y} \mapsto K} : \tykernel{Z}{\Keff}
  }

  \coopinfer{TyKernel-MatchEmpty}{
    \Gamma \types V : \tyempty
  }{
    \Gamma \types \tmmatch[Z \at C]{V}{} : \tykernel{Z}{(\sig, E, S, C)}
  }

  \coopinfer{TyKernel-MatchSum}{
    \Gamma \types V : X + Y \\
    \Gamma, x \of X \types K : \tykernel{Z}{\Keff} \\
    \Gamma, y \of Y \types L : \tykernel{Z}{\Keff} \\
  }{
    \Gamma \types \tmmatch{V}{\tminl{x} \mapsto K, \tminr{y} \mapsto L} : \tykernel{Z}{\Keff}
  }

  \coopinfer{TyKernel-Op}{
    \Keff \equiv (\sig, E, S, C) \\
    \op \in \sig \\
    \Gamma \types V : A_\op \\\\
    \Gamma, x \of B_\op \types K : \tykernel{X}{\Keff} \\
    \big(
      \Gamma \vdash L_e : \tykernel{X}{\Keff}
    \big)_{e \in E_\op}
  }{
    \Gamma \types \tmop{op}{X}{V}{\tmcont x K}{\tmexccont L e {E_\op}} : \tykernel{X}{\Keff}
  }

  \coopinfer{TyKernel-Raise}{
     e \in E
  }{
     \Gamma \types \tmraise[X \at C]{e} : \tykernel{X}{(\sig, E, S, C)}
  }

  \coopinfer{TyKernel-Kill}{
     s \in S
  }{
     \Gamma \types \tmkill[X \at C]{s} : \tykernel{X}{(\sig, E, S, C)}
  }

  \coopinfer{TyKernel-Getenv}{
    \Gamma, c \of C \types K : \tykernel{X}{(\sig, E, S, C)}
  }{
    \Gamma \types \tmgetenv[C]{\tmcont c K} : \tykernel{X}{(\sig, E, S, C)}
  }

  \coopinfer{TyKernel-Setenv}{
    \Gamma \types V : C \\
    \Gamma \types K : \tykernel{X}{(\sig, E, S, C)}
  }{
    \Gamma \types \tmsetenv{V}{K} : \tykernel{X}{(\sig, E, S, C)}
  }

  \coopinfer{TyKernel-User}{
   \Keff \equiv (\sig, E', S, C) \\
   \Gamma \types M : \tyuser{X}{(\sig, E)} \\\\
   \Gamma, x \of X \types K : \tykernel{Y}{\Keff} \\
   \big(
     \Gamma \types L_e : \tykernel{Y}{\Keff}
   \big)_{e \in E}
  }{
    \Gamma \types
    \tmuser{M}{
      \{ \tmreturn{x} \mapsto K,
         (\tmraise{e} \mapsto L_e)_{e \in E}
      \}
    }
    : \tykernel{Y}{\Keff}
  }
\end{mathpar}
  \caption{Kernel typing rules.}
  \label{fig:typing-kernel}
\end{figure}



\section{Equational theory of $\lambdacoop$}
\label{sec:appendix-equational-rules}

Values, user and kernel computations each have an equality judgement
\begin{equation*}
\Gamma \types V \equiv W : X
\qquad
\Gamma \types M \equiv N : \tyuser{X}{\Ueff}
\qquad
\Gamma \types K \equiv L  : \tyuser{X}{\Keff}.
\end{equation*}
It is presupposed that we only compare well-typed expressions with the indicated types.
For the most part, the context and the type annotation will play no part in the equation,
and so we shall drop them when no confusion can arise.

The \emph{computational equations} are displayed in \cref{fig:computational-equations-user,fig:computational-equations-kernel}.
These can be read left-to-right as evaluation rules that explain the operational meaning
of computations. The remaining equations are displayed in \cref{fig:other-equations}.
We omit standard equations which specify how substitution is performed, as well
as equations stating that equality is a congruence with respect to all the term
formers.

\begin{figure}[tbp]
  \centering
  \parbox{\textwidth}{
  \small
  \mathtoolsset{original-shortintertext=false,below-shortintertext-sep=0pt,above-shortintertext-sep=0pt}
  \begin{align*}
    \tmapp{(\tmfun{x \of X}{M})}{V} &\equiv M[V/x]
    \\[1ex]
    \tmtry{(\tmreturn{V})}{H} &\equiv N[V/x]
    \\
    \tmtry{(\tmraise[X]{e})}{H} &\equiv N_{e}
    \\
    \tmtry{(
      \tmop{op}{X}{V}{\tmcont x M}{\tmexccont {N'} {e'} {E_\op}}
    )}{H} &\equiv \\
    \tmop{op}{X}{V}{\tmcont x {\tmtry{&M}{H}}}{\left(\tmtry{N'_{e'}}{H}\right)_{e' \in E_\op}}
    \\[1ex]
    \tmmatch{\tmpair{V}{W}}{\tmpair{x}{y} \mapsto M} &\equiv
    M[V/x, W/y]
    \\
    \tmmatch[X]{V}{} &\equiv
    N
    \\
    \mathllap{
      \tmmatch{(\tminl[X,Y]{V})}{\tminl{x} \mapsto M, \tminr{y} \mapsto N} 
    } &\equiv
    M[V/x]
    \\
    \mathllap{
      \tmmatch{(\tminr[X,Y]{W})}{\tminl{x} \mapsto M, \tminr{y} \mapsto N}
    } &\equiv
    N[W/y]
    \\[1ex]
    \tmrun{V}{W}{(\tmreturn{V'})}{F} &\equiv N[V'/x, W/c]
    \\
    \tmrun{V}{W}{(\tmraise[X]{e})}{F} &\equiv N_{e}[W/c]
    \\
    \omit\rlap{$
    \tmkw{using}\; R \at W \;\tmkw{run}\;
      \tmop{op}{X}{V}{\tmcont x M}{\tmexccont {N'} {e'} {E_\op}}
      \;\tmkw{finally}\; F \equiv $}
    \\
    &
    \quad\tmkernel{K_\op[V/x]}{W}{F'}
    \\
    \shortintertext{\hfil \parbox{0.6\textwidth}{
    \abovedisplayskip=0pt
    \belowdisplayskip=0pt
    \begin{equation*}
      \text{where\ } F' \defeq 
      \begin{aligned}[t]
        \{
        &\tmreturn{x} \at c' \mapsto (\tmrun{R}{c'}{M}{F}), \\
        &\left(
            \tmraise{e'} \at c' \mapsto (\tmrun{R}{c'}{N'_{e'}}{F})
         \right)_{e' \in E_\op},\\
        &\left(
            \tmkill{s} \mapsto N_s
          \right)_{s \in S}
      \}
      \end{aligned}
    \end{equation*}
    }}
  \\
    \tmkernel{(\tmreturn[C]{V})}{W}{F} &\equiv N[V/x, W/c] \\
    \tmkernel{(\tmraise[X \at C]{e})}{W}{F} &\equiv N_{e}[W/c] \\
    \tmkernel{(\tmkill[X \at C]{s})}{W}{F} &\equiv N_{s} \\
    \tmkernel{(\tmgetenv[C]{\tmcont c K})}{W}{F} &\equiv \tmkernel{K[W/c]}{W}{F} \\
    \tmkernel{(\tmsetenv{V}{K})}{W}{F} &\equiv \tmkernel{K}{V}{F}
    \\
    \omit\rlap{$
       \tmkernel{\tmop{op}{X}{V}{\tmcont x K}{\tmexccont L {e'} {E_\op}}}{W}{F}
      \equiv $}
    \\
      \tmop{op}{X}{V}{\tmcont x {\tmkernel{K}{W}{&F}}}{
      \left(
         \tmkernel{L_{e'}}{W}{F}
      \right)_{e' \in E_\op}}
  \end{align*}
  Abbreviations:
  \begin{align*}
    F &\defeq
       \{ \tmreturn{x} \at c \mapsto N,
       (\tmraise{e} \at c \mapsto N_e)_{e \in E},
       (\tmkill{s} \mapsto N_s)_{s \in S}
       \}
    \\
    H &\defeq
       \{ \tmreturn{x} \mapsto N,
          (\tmraise{e} \mapsto N_e)_{e \in E}
       \}
    \\
    R &\defeq \tmrunner{(\tm{op}\,x \mapsto K_{\tm{op}})_{\tm{op} \in \sig}}{C}
  \end{align*}
  } 
  \caption{Computational equations (user mode).}
  \label{fig:computational-equations-user}
\end{figure}

\begin{figure}[tb]
  \centering
  \parbox{\textwidth}{
  \small
  \begin{align*}
    \tmapp{(\tmfunK{x \of X}{K})}{V} &\equiv K[V/x]
    \\[1ex]
    \tmtry{(\tmreturn{V})}{G} &\equiv L[V/x]
    \\
    \tmtry{(\tmraise[X \at C]{e})}{G} &\equiv L_{e}
    \\
    \tmtry{(\tmkill[X \at C]{s})}{G} &\equiv \tmkill[X \at C]{s}
    \\
    \tmtry{(
      \tmop{op}{X}{V}{\tmcont x K}{\tmexccont {L'} {e'} {E_\op}}
    )}{G} &\equiv \\
    \tmop{op}{X}{V}{\tmcont x {\tmtry{&K}{G}}}{\left(\tmtry{L'_{e'}}{G}\right)_{e' \in E_\op}}
    \\
    \tmtry{(\tmgetenv[C]{\tmcont c K})}{G} &\equiv \tmgetenv[C]{\tmcont c {\tmtry{K}{G}}}
    \\
    \tmtry{(\tmsetenv{V}{K})}{G} &\equiv \tmsetenv{V}{\tmtry{K}{G}}
    \\[1ex]
    \tmmatch{\tmpair{V}{W}}{\tmpair{x}{y} \mapsto K} &\equiv
    K[V/x, W/y]
    \\
    \tmmatch[X \at C]{V}{} &\equiv
    K
    \\
    \tmmatch{(\tminl[X,Y]{V})}{\tminl{x} \mapsto K, \tminr{y} \mapsto L} &\equiv
    K[V/x]
    \\
    \tmmatch{(\tminr[X,Y]{W})}{\tminl{x} \mapsto K, \tminr{y} \mapsto L} &\equiv
    L[W/y]
    \\[1ex]
    \tmuser{(\tmreturn{V})}{G} &\equiv L[V/x]
    \\
    \tmuser{(\tmraise[X]{e})}{G} &\equiv L_e
    \\
    \tmuser{(
      \tmop{op}{X}{V}{\tmcont x M}{\tmexccont {N'} {e'} {E_\op}}
    )}{G} &\equiv \\
    \tmop{op}{X}{V}{\tmcont x {\tmuser{&M}{G}}}{\left(\tmuser{N'_{e'}}{G} \right)_{e' \in E_\op}}
  \end{align*}
  Abbreviation: $G \defeq \{ \tmreturn{x} \mapsto L, (\tmraise{e} \mapsto L_e)_{e \in E} \}$
  } 
  \caption{Computational equations (kernel mode).}
  \label{fig:computational-equations-kernel}
\end{figure}

\begin{figure}[tb]
  \centering
  \parbox{\textwidth}{
  \small
  \begin{gather*}
    V \equiv \tmunit : \tyunit \qquad\qquad
    \tmfun{x \of A}{\tmapp{V}{x}} \equiv V \qquad\qquad
    \tmfunK{x \of A}{\tmapp{V}{x}} \equiv V
    \\
    \tmtry{M}{
       \{ \tmreturn{x} \mapsto \tmreturn{x},
          (\tmraise{e} \mapsto \tmraise[X]{e})_{e \in E}
       \}
    }
    \equiv M
    \\
    \tmtry{K}{
       \{ \tmreturn{x} \mapsto \tmreturn{x},
          (\tmraise{e} \mapsto \tmraise[X \at C]{e})_{e \in E}
       \}
    }
    \equiv K
    \\[1ex]
  \begin{aligned}
  \tmgetenv[C]{\tmcont c {\tmsetenv{c}{K}}} &\equiv K
  \\
  \tmsetenv{V}{\tmgetenv[C]{\tmcont c K}} &\equiv
  \tmsetenv{V}{K[V/c]}
  \\
  \tmsetenv{V}{\tmsetenv{W}{K}} &\equiv \tmsetenv{W}{K}
  \\
  \tmgetenv[C]{\tmcont c {\tmkill[X \at C]{s}}} &\equiv
  \tmkill[X \at C]{s}
  \\
  \tmsetenv{V}{\tmkill[X \at C]{s}} &\equiv
  \tmkill[X \at C]{s}
  \\
  \tmgetenv[C]{\tmcont c {\tmop{op}{X}{V}{\tmcont x K}{\tmexccont L e {E_\op}}}}
  &\equiv \\
  \tmop{op}{X}{V}{\tmcont x {\tmgetenv[C]{&\tmcont c K}}}{\left(\tmgetenv[C]{\tmcont c {L_e}}\right)_{e \in E_\op}}
  \\
  \tmsetenv{V}{\tmop{op}{X}{V}{\tmcont x K}{\tmexccont L e {E_\op}}}
  &\equiv \\
  \tmop{op}{X}{&V}{\tmcont x {\tmsetenv{V}{K}}}{\left(\tmsetenv{V}{L_e}\right)_{e \in E_\op}}
  \end{aligned}
  \end{gather*}
  } 
  \caption{Other equations (for $\eta$-expansion and the kernel theory from \cref{sec:user-kernel-monads}).}
  \label{fig:other-equations}
\end{figure}


\end{document}